
\documentclass[a4paper,USenglish,numberwithinsect]{lipics-v2018}

\usepackage{microtype}
\usepackage{upgreek}
\usepackage{csquotes} 
\usepackage{enumerate} 
\usepackage{algorithm,bbm}
\usepackage[noend]{algpseudocode}
\usepackage{amsmath,amssymb,amsthm} 
\usepackage{bm} 
\usepackage{amsfonts}
\usepackage{lmodern}
\usepackage[numbers,sort&compress]{natbib}
\usepackage{graphics,graphicx}

\theoremstyle{plain}

\newenvironment{claimproof}[1][\proofname]{\begin{proof}[#1]\renewcommand{\qedsymbol}{\claimqed}}{\end{proof}\renewcommand{\qedsymbol}{\plainqed}}

\newcommand{\NP}{\ensuremath{\mathsf{NP}}\xspace}

\newcommand{\PisNP}{\ensuremath{\mathsf{P=NP}}\xspace}

\newtheorem{claim}[theorem]{Claim}

\newtheorem{conjecture}[theorem]{Conjecture}
\newcommand\TombStone{\rule{.7ex}{1.7ex}}
\renewcommand{\qedsymbol}{\TombStone}
\newcommand{\qedd}{\let\qed\relax\quad\raisebox{-.1ex}{$\qedsymbol$}}
\newcommand{\claimqedd}{\let\qed\relax\quad\raisebox{-.1ex}{$\lrcorner$}}
\usepackage{tikz}
\usetikzlibrary{arrows,decorations.markings, shapes.misc, shapes.multipart, positioning}



\title{Hamiltonicity below Dirac's condition}

\author{Bart M.P. Jansen}{Eindhoven University of Technology}{b.m.p.jansen@tue.nl}{http://orcid.org/0000-0001-8204-1268}{Supported by NWO Gravitation grant ``Networks''.}
\author{L\'{a}szl\'{o} Kozma}{Freie Universit\"at Berlin}{laszlo.kozma@fu-berlin.de}{}{Supported by ERC Consolidator Grant No 617951.}
\author{Jesper Nederlof}{Eindhoven University of Technology}{j.nederlof@tue.nl}{}{Supported by NWO Gravitation grant ``Networks'' and NWO Grant No 639.021.438.}

\authorrunning{Bart M.P. Jansen, L\'{a}szl\'{o} Kozma, and Jesper Nederlof}

\Copyright{Bart M.P. Jansen, L\'{a}szl\'{o} Kozma, and Jesper Nederlof}

\subjclass{%
\ccsdesc[500]{Theory of computation~Graph algorithms analysis},
\ccsdesc[500]{Theory of computation~Parameterized complexity and exact algorithms}}

\keywords{Hamiltonian cycle, fixed-parameter tractability, kernelization.}

\category{}

\relatedversion{}

\supplement{}

\funding{}

\EventEditors{John Q. Open and Joan R. Access}
\EventNoEds{2}
\EventLongTitle{..}
\EventShortTitle{2019}
\EventAcronym{..}
\EventYear{2018}
\EventDate{..}
\EventLocation{..}
\EventLogo{}
\SeriesVolume{42}
\ArticleNo{23}
\nolinenumbers 
\hideLIPIcs  

\usepackage{xspace}

\begin{document}
\date{}
\maketitle

\begin{abstract}
Dirac's theorem (1952) is a classical result of graph theory, stating that an $n$-vertex graph ($n \geq 3$) is Hamiltonian if every vertex has degree at least $n/2$. Both the value $n/2$ and the requirement for \emph{every vertex} to have high degree are necessary for the theorem to hold.

In this work we give efficient algorithms for determining Hamiltonicity when either of the two conditions are relaxed. More precisely, we show that the Hamiltonian cycle problem can be solved in time $c^k \cdot n^{O(1)}$, for some fixed constant $c$, if at least $n-k$ vertices have degree at least $n/2$, or if all vertices have degree at least $n/2 - k$. The running time is, in both cases, asymptotically optimal, under the exponential-time hypothesis (ETH). 

The results extend the range of tractability of the Hamiltonian cycle problem, showing that it is fixed-parameter tractable when parameterized below a natural bound. 
In addition, for the first parameterization we show that a kernel with $O(k)$ vertices can be found in polynomial time.

\end{abstract}

\section{Introduction}\label{sec:intro}

The {\sc Hamiltonian Cycle} problem asks whether a given undirected graph has a cycle that visits each vertex exactly once. It is a central problem of graph theory, operations research, and computer science, with an early history that well predates these fields (see e.g.\ \cite{Knuth2018}).   

Several conditions that guarantee the existence of a Hamiltonian cycle in a graph are known. Perhaps best known among these is Dirac's theorem from 1952~\cite{Dirac}. It states that a graph with $n$ vertices ($n \geq 3$) is Hamiltonian if every vertex has degree at least $n/2$. Various extensions and refinements of Dirac's theorem have been obtained, often involving further graph parameters besides minimum degree (see e.g.\ the book chapters~\cite[\S\,10]{Diestel}, \cite[\S\,11]{Lawler1985} and survey articles~\cite{gould2014recent, LiSurvey, kuhn2014hamilton} for an overview). We remark that a polynomial-time verifiable condition for Hamiltonicity cannot be both necessary and sufficient, unless \PisNP~\cite{Karp1972}. In its stated form, Dirac's theorem is as strong as possible. In particular, if we replace $n/2$ by $\lfloor n/2 \rfloor$, the graph may fail to be two-connected---a precondition for Hamiltonicity. (Consider two $\lceil n/2 \rceil$-cliques with a common vertex.) 

In this paper we relax the conditions of Dirac's theorem and consider input graphs in which (1) at least $n-k$ vertices have degree at least $n/2$ (the degrees of the remaining vertices can be arbitrarily small), or (2) all vertices have degree at least $n/2 - k$. 

For both relaxations we show that {\sc Hamiltonian Cycle} can be solved deterministically, in time $c^k \cdot n^{O(1)}$, for some fixed constant $c$. This establishes the fixed-parameter tractability of {\sc Hamiltonian Cycle} when parameterized by the distance from Dirac's bound, for two natural ways of measuring this distance. 

The known exact algorithms for {\sc Hamiltonian Cycle} in general graphs have exponential running time (the problem is one of the original $21$ \NP-hard problems~\cite{Karp1972}). The best deterministic running time of $O(2^n \cdot n^2)$ is achieved by the dynamic programming algorithm of Bellman~\cite{Bellman1962}, and Held and Karp~\cite{HeldKarp1962}, and has not been improved since the 1960s. Among randomized algorithms, the current best running time of $O(1.657^n)$ is achieved by the more recent algorithm of Bj\"orklund~\cite{Bjorklund} based on determinants. Improving these bounds remains a central open question of the field. 

Assuming the exponential-time hypothesis (ETH)~\cite{ImpagliazzoEtAl2001}, there is no algorithm for {\sc Hamiltonian Cycle} with running time $2^{o(n)}$. In both parameterizations considered in this paper, $k \leq n$ holds. Thus, under ETH, a running time of the form $2^{o(k)} \cdot n^{O(1)}$ is ruled out, and our algorithms are optimal, up to the base of the exponential.
 Furthermore, there exists a fixed constant $\upalpha>0$, such that our parameterized bounds asymptotically improve the current best bounds for {\sc Hamiltonian Cycle}, if the value of $k$ is at most $\upalpha \cdot n$. 

For the first parameterization, we show that {\sc Hamiltonian Cycle} admits a kernel with $O(k)$ vertices, computable in polynomial time. In other words, the input graph can be compressed (roughly) to the order of its sparse part, while preserving Hamiltonicity.

Our results show that checking Hamiltonicity becomes tractable as we approach the degree-bound of Dirac's theorem. The crude intuition behind Dirac's theorem (and many of its generalizations) is that \emph{having many edges} makes a graph Hamiltonian. It is a priori far less obvious why approaching the Dirac bound would make the \emph{algorithmic problem} easier; one may even expect that the more edges there are, the harder it becomes to certify \emph{non-Hamiltonicity}. To provide some intuition why this is not the case, we give a brief informal summary of the arguments.

When $n-k$ vertices have degree at least $n/2$, i.e.\ in the first case, our algorithm takes advantage of the fact that, by a result of Bondy and Chv\'atal, the subgraph induced by the high-degree vertices can be completed to a clique without changing the Hamiltonicity of the graph; all relevant structure is thus in the sparse part and its interconnection with the dense part. Then, we find a subset of the vertices in the clique that are well-connected to the sparse part (by solving a matching problem in an auxiliary graph), and we ignore the remainder of the clique. Finally, we show how a Hamiltonian cycle on this smaller, well-connected subgraph, can be extended to a Hamiltonian cycle of the entire graph, guided by the alternating paths of the matching. For this parameterization we are not aware of a comparable result in the literature. 

When all vertices have degree at least $n/2 - k$, i.e.\ in the second case, a result of Nash-Williams implies that either a Hamiltonian cycle, or a sufficiently large independent set can be found in polynomial time. In the latter case, we certify non-Hamiltonicity by showing (roughly) that the complement of the independent set is not coverable by a certain number of disjoint paths. This argument is essentially the same as the one given by H\"aggkvist~\cite{haggkvist} towards his algorithm with running time $O(n^{5k})$ for the same parameterization. (H\"aggkvist states this algorithmic result as a corollary of structural theorems. He does not describe the details of the algorithm or its analysis, but these are not hard to reconstruct.) Here we improve the running time of H\"aggkvist's algorithm to the stated (asymptotically optimal) $c^k \cdot n^{O(1)}$ by more efficiently solving the arising path-cover subproblem.

\subsection{Statement of results}

Our first result shows that if a graph has a ``relaxed'' Dirac property, it can be compressed while preserving its Hamiltonicity.

\begin{theorem}\label{thm1}
Let $G$ be an $n$-vertex graph such that at least $n-k$ vertices of $G$ have degree at least $n/2$. There is a deterministic algorithm that, given $G$, constructs in time $O(n^3)$ a $3k$-vertex graph $G'$, such that $G$ is Hamiltonian if and only if $G'$ is Hamiltonian. 
\end{theorem}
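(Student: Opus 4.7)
The plan is to execute the outline sketched in the introduction, combining two classical tools: the Bondy--Chv\'atal closure and maximum bipartite matching. First I would compute the closure $G^*$ of $G$ in $O(n^3)$ time; by the Bondy--Chv\'atal theorem, $G$ and $G^*$ are Hamiltonian-equivalent, and in $G^*$ any two non-adjacent vertices have degree sum below $n$. Consequently the set $D$ of $G^*$-vertices of degree at least $n/2$ induces a clique, while $S := V \setminus D$ consists entirely of vertices whose $G$-degree was below $n/2$, giving $|S| \le k$. Next I would build a bipartite graph $B$ on $S^{(2)} \cup D$, where $S^{(2)}$ contains two copies of each $s \in S$ and the edges are inherited from $G^*$, compute a maximum matching $M$, and let $X_1 \subseteq D$ be the set of $D$-vertices saturated by $M$; as each $s$-copy contributes at most one edge, $|X_1| \le 2|S|$. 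Padding $X_1$ with arbitrary further vertices of $D$ to total size $\min(|D|, 3k - |S|)$ produces $X$, and the output kernel is $G' := G^*[S \cup X]$, which has at most $3k$ vertices.

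For the forward direction, suppose $G^*$ has a Hamiltonian cycle $H^*$, whose restriction to $S$ consists of $t \le |S|$ maximal $S$-segments. Each segment endpoint induces a required copy in $S^{(2)}$, yielding a set $Y \subseteq S^{(2)}$ of size $2t$ together with a matching $M_Y$ (read off $H^*$) that saturates $Y$ into $D$. The key lemma is a standard augmenting-path argument: given $M$ and $M_Y$, any connected component of $M \triangle M_Y$ that starts at a $D$-vertex in $V(M_Y) \setminus V(M)$ must, by maximality of $M$, end at a $D$-vertex of $V(M) = X_1$, since otherwise one extracts an $M$-augmenting path. Swapping $M_Y$ along such components preserves $Y$-saturation and transfers the $D$-side of $M_Y$ entirely into $X_1$. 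Using this rerouted matching I would keep the $S$-segments of $H^*$ intact, replace their connectors by the $X_1$-vertices picked out by the rerouted matching, and distribute the remaining vertices of $X$ arbitrarily along the $t$ inter-segment stretches; since $X \subseteq D$ is a clique in $G^*$, every edge used is present in $G'$, producing a Hamiltonian cycle of $G'$.

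For the reverse direction, suppose $G'$ has a Hamiltonian cycle $H'$. In the non-trivial case $n > 3k$ we have $|X| = 3k - |S| \ge 2k > |S|$, so along the cyclic alternation of $S$- and $X$-blocks on $H'$ the pigeonhole principle forces an $X$-block of size at least two, hence two consecutive $X$-vertices $x_1, x_2$. Because $X \subseteq D$ is a clique in $G^*$, I would splice the clique $D \setminus X$ as a path from $x_1$ to $x_2$, obtaining a Hamiltonian cycle of $G^*$ and hence, via Bondy--Chv\'atal, of $G$. I expect the main technical obstacle to be the forward direction: the \emph{single} set $X_1$ of size at most $2k$ must simultaneously witness every possible connector configuration arising from a Hamiltonian cycle of $G^*$. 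This is really the matroid-theoretic statement that a maximum matching's right-side preserves the transversal matroid on $S^{(2)}$, and it is what allows the kernel to scale linearly in $k$ rather than with $n$.
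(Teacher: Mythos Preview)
Your construction (Bondy--Chv\'atal closure, the two-copy bipartite graph on $S^{(2)}\cup D$, keeping the $M$-saturated clique vertices $X_1$ plus padding) and your reverse direction are exactly the paper's. The forward direction, however, has a genuine gap: the object $M_Y$ you ``read off $H^*$'' need not be a matching. If $H^*$ contains a length-one $D$-connector, say $\ldots\, s - d - s'' \,\ldots$ with $s,s''\in S$ and $d\in D$, then $d$ is simultaneously the $H^*$-partner of two distinct elements of $Y$. More globally, $Y$ need not even be independent in the transversal matroid on $S^{(2)}$: take $S=\{s_1,s_2\}$, $D=\{d_1,d_2,d_3,d_4\}$ a clique, $N(s_1)\cap D=\{d_1,d_2\}$, $N(s_2)\cap D=\{d_2,d_3\}$, and $H^*=s_1\,d_1\,d_4\,d_3\,s_2\,d_2\,s_1$. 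Then $|Y|=4$ but $N_B(Y)=\{d_1,d_2,d_3\}$, so by Hall's condition no matching saturates $Y$ at all, and your augmenting-path rerouting of $M_Y$ into $X_1$ cannot even get started. The matroid statement you invoke is correct, but its hypothesis fails here.

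The paper circumvents this by not aiming for a matching that saturates $Y$; it only needs a \emph{path cover} of $S$ in $G'$ with $C'$-endpoints, in which a $C'$-vertex of degree~$2$ (like $d_2$ above) is perfectly acceptable. Concretely, it takes the Dulmage--Mendelsohn viewpoint: let $R_{S'}\subseteq S^{(2)}$ be the vertices reachable from $M$-unsaturated $D$-vertices by $M$-alternating paths. For $v\in S$ with copies \emph{outside} $R_{S'}$, every $D$-neighbor of $v$ is $M$-saturated (otherwise one gets an $M$-augmenting path), so the original $H^*$-edges from $v$ already land in $X_1$ and are kept verbatim---including shared endpoints such as $d_2$ in the example. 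For $v$ with copies \emph{inside} $R_{S'}$, both copies are $M$-saturated into $R_C\subseteq X_1$, and the $M$-partners are used instead. The two cases write into disjoint parts of $X_1$ (namely $X_1\setminus R_C$ versus $R_C$), so no vertex exceeds degree~$2$ and the result is a valid path cover with $C'$-endpoints. This is the piece your argument is missing.
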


Equivalently stated in the language of parameterized complexity, the Hamiltonian cycle problem parameterized by $k$ has a \emph{kernel} with a linear number of vertices. To determine the Hamiltonicity of a graph $G$, we simply apply the algorithm of Theorem~\ref{thm1} to compress $G$, and use an exponential-time algorithm (for instance, the Held-Karp algorithm) to solve {\sc Hamiltonian Cycle} directly on the compressed graph. We thus obtain the following result. 

\begin{corollary}
If at least $n-k$ vertices of an $n$-vertex graph $G$ have degree at least $n/2$, then {\sc Hamiltonian Cycle} with input $G$ can be solved in deterministic time $O(8^k \cdot k^2 + n^3)$. 
\end{corollary}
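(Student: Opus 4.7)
The plan is to obtain the corollary as an immediate consequence of Theorem~\ref{thm1} combined with any known exponential-time algorithm for \textsc{Hamiltonian Cycle}. Given the input graph $G$ on $n$ vertices with at least $n-k$ vertices of degree at least $n/2$, I would first invoke Theorem~\ref{thm1} to construct, in time $O(n^3)$, an equivalent instance $G'$ on at most $3k$ vertices such that $G$ is Hamiltonian if and only if $G'$ is Hamiltonian.

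Next, I would run a direct exponential-time algorithm for \textsc{Hamiltonian Cycle} on $G'$. The natural choice is the Bellman--Held--Karp dynamic program, whose running time on an $m$-vertex graph is $O(2^m \cdot m^2)$. Applying it to $G'$ with $m \leq 3k$ yields a running time of $O(2^{3k} \cdot (3k)^2) = O(8^k \cdot k^2)$.

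Adding the two contributions gives a total deterministic running time of $O(8^k \cdot k^2 + n^3)$, matching the bound claimed in the corollary. Correctness follows immediately from the equivalence of Hamiltonicity between $G$ and $G'$ guaranteed by Theorem~\ref{thm1}.

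There is no real obstacle here: the corollary is essentially a bookkeeping statement combining the kernel size of Theorem~\ref{thm1} with the base $2$ of the Held--Karp exponent. The only minor point to check is that $3k \geq 3$ (so that Held--Karp applies meaningfully); for $k = 0$ the graph satisfies Dirac's condition and is Hamiltonian whenever $n \geq 3$, and for small $n$ the problem is solvable in constant time, so these edge cases are absorbed in the $n^3$ term.
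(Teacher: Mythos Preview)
Your proposal is correct and matches the paper's approach exactly: apply Theorem~\ref{thm1} to obtain an equivalent $3k$-vertex instance in $O(n^3)$ time, then run the Bellman--Held--Karp dynamic program on it in $O(2^{3k}\cdot(3k)^2)=O(8^k\cdot k^2)$ time.
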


As an alternative, we may also use an approach based on inclusion-exclusion~\cite{KarpIE} to solve the reduced {\sc Hamiltonian cycle} instance, achieving the overall running time $O(8^k \cdot k^3 + n^3)$, with \emph{polynomial space}.

Our result for the second relaxation of Dirac's theorem is as follows.

\begin{theorem}\label{thm2}
If every vertex of an $n$-vertex graph $G$ has degree at least $n/2 - k$, then {\sc Hamiltonian Cycle} with input $G$ can be solved in deterministic time $O(30^{6k} \cdot n^3)$. 
\end{theorem}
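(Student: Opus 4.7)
The plan is to follow the two-step strategy outlined in the introduction. First, I would use a constructive form of the Nash-Williams theorem (as cited by H\"aggkvist) that, given $G$ with minimum degree at least $n/2-k$, produces in polynomial time either a Hamiltonian cycle of $G$ (in which case we are done) or an independent set $I \subseteq V(G)$ whose size relative to $V(G)\setminus I$ reduces Hamiltonicity to a path-cover question on $G-I$. The reduction is the standard one: if $C$ is any Hamiltonian cycle of $G$, then deleting $I$ from $C$ leaves exactly $|I|$ vertex-disjoint paths covering $V(G)\setminus I$, and each $v \in I$ is flanked on $C$ by two of its $G$-neighbors. Hence $G$ is Hamiltonian if and only if $V(G)\setminus I$ admits a partition into $|I|$ vertex-disjoint paths whose $2|I|$ endpoints can be paired through the vertices of $I$ by a matching that closes into a single cycle. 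By H\"aggkvist's analysis the relevant parameter can be bounded by a linear function of $k$.

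The second and main step is to decide this constrained partition-plus-matching problem in time single-exponential in the parameter, which is precisely where H\"aggkvist's algorithm pays an $n^{5k}$ factor. A naive dynamic program that stores, for each subset $S \subseteq I$ already incorporated into the partial cycle, an explicit matching of the currently open endpoints, would have a table of size roughly $|I|!\cdot 2^{|I|}$ and would not beat H\"aggkvist. The main obstacle, and where I expect the technical novelty of the proof to concentrate, is to compress the matching dimension to single exponential via the rank-based representative-sets technique developed for parameterized connectivity problems: one shows that it suffices to maintain, for each subset $S$, a family of only $c^{|I|}$ canonical ``representative'' matchings on the open endpoints in place of the full set of $|I|!$ possibilities. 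Once such a family is available, a subset DP over $I$ that absorbs one path and one $I$-vertex at a time does polynomial work per state and gives an overall $c^{O(k)}\cdot n^{O(1)}$ running time.

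A subtlety in the proof is ensuring that the matching through $I$ closes into a single Hamiltonian cycle rather than a disjoint union of smaller cycles; I would handle this by fixing a linear order on the visits to $I$ and having the DP process that order so as to break the symmetry without inflating the running time. Combining the representative-sets compression for matchings with the $O(k)$ bound on the path-cover parameter produces the announced $O(30^{6k}\cdot n^3)$ running time.
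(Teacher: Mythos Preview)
Your two-step outline is right in spirit, but the proposal contains a real gap in the second step and misses the key structural simplification that makes the paper's bound work.

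The set you remove is the independent set $I$ itself, and you then pose a constrained problem: partition $G-I$ into $|I|$ paths whose $2|I|$ endpoints can be matched through $I$ into a single cycle. You propose to handle the matching dimension with the rank-based representative-sets machinery and state that this yields $c^{O(k)}\cdot n^{O(1)}$. But $|I|=\delta+1\ge n/2-k+1$, so $|I|=\Theta(n)$, not $O(k)$. A subset DP over $I$, even with representative sets bringing the number of stored matchings down to $c^{|I|}$, runs in time $c^{\Theta(n)}$; nothing in your description explains how the exponent collapses from $|I|$ to $k$. The only parameter that is genuinely $O(k)$ here is the \emph{slack} $|V\setminus I|-|I|\le 2k$, i.e.\ the number of vertices that must sit in nontrivial paths, and your DP is not indexed by that quantity.

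What the paper actually uses from H\"aggkvist is stronger than a size bound: it is a structural lemma that produces a set $S$ (not $I$ itself, but either $A_2$ or $A_1\cup A_3$ in a tripartition built from $I$) with the property that \emph{any} cover of $G[V\setminus S]$ by $|S|$ vertex-disjoint paths extends to a Hamiltonian cycle of $G$. The ``matching through $S$ into a single cycle'' constraint you worry about is eliminated by a Bondy--Chv\'atal closure argument showing that, in the auxiliary graph, one may assume all $S$--$T$ edges are present. Once the problem is the \emph{unconstrained} question ``can an $m$-vertex graph be covered by $m-t$ vertex-disjoint paths?'' with $t\le 6k$, a straightforward color-coding DP on the at most $2t$ vertices lying on nontrivial paths solves it in $(2e)^{2t}\cdot n^{O(1)}$ time, giving the stated $30^{6k}\cdot n^3$ bound. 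No representative sets, no single-cycle bookkeeping, and no DP over a $\Theta(n)$-sized set are needed.
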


The running time of the Bellman-Held-Karp algorithm for {\sc Hamiltonian Cycle} is $O(2^n \cdot n^2)$. Denoting $\upalpha = k/n$, our results represent an asymptotic improvement if $\upalpha < 1/3$ in the first parameterization, and if $\upalpha < 0.0339$ in the second parameterization. 

As a counterpoint to our results, we mention that {\sc Hamiltonian Cycle} remains hard (in both parameterizations) for arbitrarily small values of $\upalpha$.

\begin{theorem}
Assuming ETH, {\sc Hamiltonian Cycle} cannot be solved in time $2^{o(n)}$ in $n$-vertex graphs with at least $(1 - \upalpha) \cdot n$ vertices of degree at least $n/2$, and in $n$-vertex graphs with minimum degree $(1 - \upalpha) \cdot n/2$, for arbitrary fixed $0 < \upalpha < 1/2$. 
\end{theorem}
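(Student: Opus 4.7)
The plan is to prove both statements by polynomial-time, linear-parameter reductions from \textsc{Hamiltonian $s$-$t$ Path}, which under ETH admits no $2^{o(N)}$-time algorithm on $N$-vertex instances (the standard chain of reductions from $3$-\textsc{SAT} via \textsc{Hamiltonian Cycle} to \textsc{Hamiltonian $s$-$t$ Path} all preserve the number of vertices up to a constant factor).

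For the first parameterization, given an instance $(G,s,t)$ on $N$ vertices I would form $G'$ on $n := N+M$ vertices, where $M := \lceil (1-\upalpha)N/\upalpha \rceil$, by adjoining a fresh clique $K_M$ whose $M$ vertices are each made adjacent to both $s$ and $t$ and to no other vertex of $V(G)$. Correctness follows from the observation that every clique vertex has $\{s,t\}$ as its only $V(G)$-neighbours, so any Hamiltonian cycle of $G'$ must enter and leave $V(K_M)$ exactly at $s$ and $t$, traverse $V(K_M)$ in a single consecutive block, and therefore induce a Hamiltonian $s$-$t$ path of $G$; the converse is immediate by inserting any Hamiltonian path of $K_M$ between $s$ and $t$. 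Each clique vertex has degree $M+1 \geq n/2$ because $M \geq N$ for $\upalpha \leq 1/2$, and the $M \geq (1-\upalpha) n$ clique vertices witness the degree condition. Since $n = \Theta(N)$ for fixed $\upalpha$, a hypothetical $2^{o(n)}$-time algorithm for $G'$ would yield a $2^{o(N)}$-time algorithm for the source problem, refuting ETH.

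For the second parameterization, the previous construction is insufficient, because the vertices of $V(G) \setminus \{s,t\}$ may retain arbitrarily low degree. The plan is to augment $G'$ with a second layer of padding that boosts the degree of every $V(G)$-vertex without disrupting the Hamiltonicity encoding. Concretely, I would attach to each $v \in V(G) \setminus \{s,t\}$ a private ``fan'' of fresh vertices that also attach back to $V(K_M)$ in a tightly controlled way, dimensioning the fans and the clique so that (i) every vertex of $V(G')$ reaches degree at least $(1-\upalpha) n/2$, and (ii) the fans can only be consumed by a Hamiltonian cycle in a single canonical ``detour'' through $v$, preserving the reduction to \textsc{Hamiltonian $s$-$t$ Path}.

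The main obstacle is to verify~(ii): one must show, by a careful case analysis of how a Hamiltonian cycle can weave through the fans and $V(K_M)$, that no spurious Hamiltonian cycle of $G'$ can exist unless $G$ has a Hamiltonian $s$-$t$ path. Arranging the padding to be dense enough for the minimum-degree condition but rigid enough to not introduce such shortcuts is exactly where the delicate work sits. Once this structural claim is established, the construction still has $n = \Theta(N)$ vertices, so ETH again rules out any $2^{o(n)}$-time algorithm.
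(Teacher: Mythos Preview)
Your treatment of the first parameterization is correct and coincides with the paper's approach: the paper also pads an $\upalpha n$-vertex \textsc{Hamiltonian Path} instance with a $(1-\upalpha)n$-vertex clique attached at the two terminals (the paper uses just two disjoint edges into the clique rather than making every clique vertex adjacent to both $s$ and $t$, but either variant works and the counting is the same).

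For the second parameterization, however, what you have written is a plan rather than a proof, and you yourself flag the gap. Attaching ``fans'' that are simultaneously adjacent to a vertex $v\in V(G)\setminus\{s,t\}$ and to many vertices of $K_M$ forces every fan vertex to have roughly $n/2$ neighbours in the clique (otherwise the fan vertices themselves violate the minimum-degree requirement), and this abundance of crossing edges is precisely what makes the rigidity claim~(ii) hard: a Hamiltonian cycle could in principle weave in and out of the clique through many different fan vertices, and nothing in your outline explains why this cannot manufacture a Hamiltonian cycle of $G'$ when $G$ has no Hamiltonian $s$--$t$ path. The ``careful case analysis'' you allude to is the entire content of the argument, and it is not supplied.

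The paper avoids this difficulty altogether: it does not design a new gadget but simply invokes the \NP-hardness construction of Dahlhaus, Hajnal, and Karpinski, who already proved that \textsc{Hamiltonian Cycle} remains \NP-hard on $n$-vertex graphs of minimum degree $(1-\upalpha)\cdot n/2$ for every fixed $\upalpha>0$; their reduction is linear in size, so the ETH lower bound transfers immediately. The clean fix to your write-up is therefore either to cite that result, or to actually exhibit a concrete padding gadget together with a full proof of~(ii).
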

\begin{proof}
In both cases we construct a graph with the given degree-requirements that embeds a hard instance of {\sc Hamiltonian Path} with $\upalpha \cdot n$ vertices. 
For the second statement we can use the construction from the NP-hardness proof of Dahlhaus, Hajnal, and Karpinski~\cite{dahlhaus}. For the first statement, consider an $\upalpha \cdot n$-vertex instance of {\sc Hamiltonian Path}, connected by two disjoint edges to an $(1-\upalpha) \cdot n$-vertex clique. \qedd
\end{proof}

\subsection{Related work}

In general, parameterized complexity~\cite{FG06, book4} allows a finer-grained understanding of algorithmic problems than classical, univariate complexity. No new insight is gained, however, if the chosen parameter $k$ is large in all interesting cases. For example, in planar graphs, the Four Color Theorem guarantees the existence of an independent set of size $n/4$. As a consequence, any exponential-time algorithm for maximum independent set trivially achieves fixed-parameter tractability in terms of the solution size. 

To deal with this issue, Mahajan and Raman~\cite{MahajanRaman} introduced the method of parameterizing problems \emph{above} or \emph{below} a guaranteed bound. (Similar considerations motivate the ``distance from triviality'' framework of Guo, H\"uffner, and Niedermeier~\cite{Guo}.) In the example of planar independent set, an interesting parameter is the amount by which the solution size exceeds $n/4$. Similar ideas have successfully been applied to several problems (see e.g.\ \cite{MRS, IPEC09, Alon2011, EE, vertex_cover_param, BCDF}). Our results also fall in the framework of ``above/below'' parameterization, with the remark that our parameter of interest is not the value to be optimized but a structural property of the input, which we parameterize near its ``critical value''. 

Perhaps closest to our work is the recent result of Gutin and Patel~\cite{TSPavg} on the Traveling Salesman problem, parameterized below the cost of the \emph{average} tour. Although it concerns Hamiltonian cycles (in an edge-weighted complete graph), the result of Gutin and Patel is not directly comparable with our results. In particular, averaging arguments do not seem to help when studying the \emph{existence} of Hamiltonian cycles, which is often determined by local structure in the graph. For instance, {\sc Hamiltonian Cycle} remains \NP-hard even in graphs with average degree $\upalpha n$ for any constant $\upalpha < 1$. (Consider a clique of $\sqrt{\upalpha} n$ vertices, connected by two non-incident edges to the remaining graph that encodes a hard instance of {\sc Hamiltonian Path}.)

\section{Preliminaries}

We use standard graph-theoretic notation (see e.g.~\cite{Diestel}). An edge between vertices $u$ and $v$ is written simply as $uv$ or $vu$. The \emph{neighborhood} of a vertex $v$ in graph $G$ is denoted by $N_G(x)$. The \emph{degree} of $v$ in $G$ is $d_G(v) = |N_G(v)|$, and the minimum degree of $G$ is $\delta_G = \min_{v \in V(G)}{d_G(v)}$. We conveniently omit the subscript $G$ whenever possible. For a set $S \subseteq V(G)$ of vertices, $G[S]$ denotes the subgraph induced by $S$ on $G$.

We state Dirac's theorem and a strengthened statement due to Ore. Let $G$ be an $n$-vertex undirected graph, with $n \geq 3$.

\begin{lemma}[Dirac~\cite{Dirac}]\label{lemdirac}
If $\delta \geq n/2$, then $G$ is Hamiltonian.
\end{lemma}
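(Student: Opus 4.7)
The plan is to follow the classical rotation argument on a longest path, which naturally splits into an easy connectivity step, a pigeonhole step producing a long cycle, and a final extension step.

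First I would establish that $G$ is connected. If $G$ had a component $C$ with $|C| \leq n/2$, then every vertex of $C$ would have degree at most $|C| - 1 < n/2$, violating the hypothesis. So every component has more than $n/2$ vertices, which forces a single component.

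Next, let $P = v_1 v_2 \cdots v_\ell$ be a longest path in $G$. By maximality, every neighbor of $v_1$ and every neighbor of $v_\ell$ lies on $P$; in particular $\ell - 1 \geq n/2$. The heart of the proof is the following pigeonhole trick to find a ``rotation'' producing a cycle on $V(P)$. Define
\[
A = \{\, i \in \{2,\ldots,\ell\} : v_1 v_i \in E(G) \,\}, \qquad
B = \{\, i \in \{2,\ldots,\ell\} : v_{i-1} v_\ell \in E(G) \,\}.
\]
Then $|A| = d(v_1) \geq n/2$ and $|B| = d(v_\ell) \geq n/2$, while the ground set $\{2,\ldots,\ell\}$ has size $\ell - 1 \leq n - 1$. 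Thus $|A| + |B| \geq n > \ell - 1$, so there exists $i \in A \cap B$. Using the edges $v_1 v_i$ and $v_{i-1} v_\ell$, the walk
\[
v_1 \; v_i \; v_{i+1} \; \cdots \; v_\ell \; v_{i-1} \; v_{i-2} \; \cdots \; v_2 \; v_1
\]
is a cycle $C$ visiting precisely the vertices of $P$.

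Finally I would argue $C$ is Hamiltonian. If some vertex $u \notin V(C) = V(P)$ existed, then by connectivity $u$ would have a neighbor $v_j$ on $C$; cutting $C$ at an edge incident to $v_j$ and appending $u$ produces a path on $\ell + 1$ vertices, contradicting the maximality of $P$. Hence $V(P) = V(G)$ and $C$ is a Hamiltonian cycle. The main obstacle is the pigeonhole step: one must set up $A$ and $B$ in the correctly shifted way so that an index in $A \cap B$ simultaneously yields usable chords at both endpoints; the connectivity and extension arguments are then comparatively routine.
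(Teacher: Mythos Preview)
Your proof is correct: the longest-path rotation argument with the shifted index sets $A$ and $B$ is the standard direct proof of Dirac's theorem, and all steps are sound. The only cosmetic imprecision is in the final paragraph, where ``$u$ would have a neighbor $v_j$ on $C$'' should really be ``some vertex outside $C$ has a neighbor on $C$'' (connectivity guarantees the latter, not that \emph{every} outside vertex is adjacent to $C$); this does not affect the argument.

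The paper, however, does not prove Dirac's theorem this way. It simply cites the result and then observes that it follows from the Bondy--Chv\'atal closure lemma (Lemma~\ref{bondy-chvatal}): since any two non-adjacent vertices $u,v$ satisfy $d(u)+d(v)\geq n$, one may repeatedly add missing edges without changing Hamiltonicity until the graph is complete, which is trivially Hamiltonian for $n\geq 3$. Your approach is self-contained and elementary, requiring no auxiliary lemma; the paper's route is shorter in context because Bondy--Chv\'atal is needed elsewhere anyway, and it simultaneously yields Ore's theorem by the same one-line deduction.
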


\begin{lemma}[Ore~\cite{Ore}]\label{lemore}
If $d(u) + d(v) \geq n$ for every non-adjacent pair of  vertices $u$, $v$ of $G$, then $G$ is Hamiltonian.
\end{lemma}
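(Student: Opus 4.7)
The plan is to argue by contradiction using an edge-maximality trick and a ``rotation'' argument on a Hamiltonian path. Suppose $G$ satisfies Ore's condition but is not Hamiltonian. I would iteratively add edges between non-adjacent pairs as long as the resulting graph remains non-Hamiltonian; this process preserves Ore's condition, because increasing degrees only makes $d(u) + d(v) \ge n$ easier to satisfy. The outcome is an edge-maximal non-Hamiltonian supergraph $G'$ of $G$ that still meets Ore's condition. Since $G \subseteq G'$, it suffices to contradict the non-Hamiltonicity of $G'$.

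Because $n \ge 3$ and $K_n$ is Hamiltonian, $G'$ must miss at least one edge, say between non-adjacent vertices $u$ and $v$. By maximality, $G' + uv$ is Hamiltonian, and every Hamiltonian cycle of $G' + uv$ must use the new edge $uv$, since otherwise it would already live in $G'$. Deleting that edge leaves a Hamiltonian $u$--$v$ path $u = x_1, x_2, \ldots, x_n = v$ in $G'$. The heart of the argument is a rotation trick: if some index $i$ with $2 \le i \le n-1$ satisfies both $u x_{i+1} \in E(G')$ and $v x_i \in E(G')$, then $x_1, x_{i+1}, x_{i+2}, \ldots, x_n, x_i, x_{i-1}, \ldots, x_2, x_1$ is a Hamiltonian cycle of $G'$, contradicting its construction.

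To exhibit such an $i$ I would use pigeonhole on the two index sets $A = \{i : u x_{i+1} \in E(G')\}$ and $B = \{i : v x_i \in E(G')\}$, both contained in $\{1, \ldots, n-1\}$, with $|A| = d(u)$ and $|B| = d(v)$ since $uv \notin E(G')$ prevents either endpoint being miscounted. Ore's hypothesis then gives $|A| + |B| \ge n > n-1$, so $A \cap B \neq \emptyset$. The main obstacle I anticipate is the careful index accounting in this last step: one must verify that every element of $A \cap B$ actually lies in $\{2, \ldots, n-1\}$ (checking the boundary cases $i=1$ and $i=n-1$) and that each consecutive pair in the proposed cyclic sequence is indeed an edge of $G'$, so that the rotation yields a genuine Hamiltonian cycle rather than a walk that revisits a vertex.
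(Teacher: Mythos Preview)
Your argument is the standard textbook proof of Ore's theorem and is correct. The boundary worry you flag resolves immediately: since $uv \notin E(G')$, we have $n-1 \notin A$ (as $x_{n}=v$) and $1 \notin B$ (as $x_1=u$), so any $i \in A \cap B$ automatically lies in $\{2,\ldots,n-2\}$ and the rotated sequence is a genuine Hamiltonian cycle.

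The paper, however, does not give this direct proof. It states the Bondy--Chv\'atal lemma (Lemma~\ref{bondy-chvatal}) and then observes in one line that iterating it yields Ore's theorem: under Ore's hypothesis every non-adjacent pair satisfies the degree condition, so repeated edge-augmentation eventually produces $K_n$, which is Hamiltonian, and Bondy--Chv\'atal says Hamiltonicity is preserved at each step. Your maximal-counterexample-plus-rotation argument is essentially what one would do to \emph{prove} Bondy--Chv\'atal itself, so the two routes share the same core idea; the paper simply packages the rotation step into the cited lemma and reuses it, whereas you unpack it explicitly. Your version is self-contained; the paper's is shorter given that Lemma~\ref{bondy-chvatal} is needed elsewhere anyway.
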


We state a theorem of Bondy and Chv\'atal that we use in the proofs of both Theorem~\ref{thm1} and Theorem~\ref{thm2}.

\begin{lemma}[Bondy-Chv\'atal~\cite{BondyChvatal}]\label{bondy-chvatal}
Let $G$ be an $n$-vertex graph, and let $G'$ be obtained from $G$ by adding an edge $uv$ to $G$ for some pair of non-adjacent vertices $u,v$ such that $d_G(u) + d_G(v) \geq n$. Then $G'$ is Hamiltonian if and only if $G$ is Hamiltonian. Moreover, given a Hamiltonian cycle of $G'$, a Hamiltonian cycle of $G$ can be obtained in linear time.
\end{lemma}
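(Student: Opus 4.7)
The ``if'' direction is immediate: any Hamiltonian cycle of $G$ is a Hamiltonian cycle of $G'$ since $E(G) \subseteq E(G')$. The interesting direction is to convert a Hamiltonian cycle $C'$ of $G'$ into one of $G$, and for this I would use the classical rotation/pigeonhole argument that underlies the proofs of both Dirac's and Ore's theorems.

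If the edge $uv$ does not appear in $C'$, then $C'$ itself already lies in $G$ and we are done. Otherwise, remove $uv$ from $C'$ and obtain a Hamiltonian path $P = v_1, v_2, \ldots, v_n$ of $G$ with $v_1 = u$ and $v_n = v$. The plan now is to find an index $i$ with $2 \le i \le n-1$ such that both $v_1 v_i \in E(G)$ and $v_{i-1} v_n \in E(G)$; given such an $i$, we can build a Hamiltonian cycle of $G$ by the ``rotation''
\[
v_1,\, v_i,\, v_{i+1},\, \ldots,\, v_n,\, v_{i-1},\, v_{i-2},\, \ldots,\, v_2,\, v_1,
\]
which traverses the suffix of $P$ forward and the reversed prefix backward, each vertex exactly once.

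The core of the proof is thus to show that such an index $i$ must exist, and for this I would introduce the sets
$S = \{\, i : 2 \le i \le n-1,\ v_1 v_i \in E(G)\,\}$ and
$T = \{\, i : 2 \le i \le n-1,\ v_{i-1} v_n \in E(G)\,\}$.
Since $uv = v_1 v_n \notin E(G)$, every neighbor of $v_1$ in $G$ is some $v_i$ with $2 \le i \le n-1$, so $|S| = d_G(u)$; analogously every neighbor of $v_n$ has index in $\{2, \ldots, n-1\}$, contributing an element to $T$ via $i = j+1$, hence $|T| = d_G(v)$. Both $S$ and $T$ are subsets of the $(n-2)$-element set $\{2, \ldots, n-1\}$, so by inclusion--exclusion
\[
|S \cap T| \ge |S| + |T| - (n-2) = d_G(u) + d_G(v) - n + 2 \ge 2,
\]
using the hypothesis $d_G(u) + d_G(v) \ge n$. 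In particular $S \cap T \ne \emptyset$, so the desired index exists.

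For the linear-time moreover claim, I would note that finding an $i \in S \cap T$ only requires scanning the Hamiltonian cycle of $G'$ once to extract $P$, then traversing $P$ while consulting adjacency information of $v_1$ and $v_n$, and finally outputting the rotated cycle; all of these are linear in the size of the cycle. I do not expect any real obstacle here: the only delicate point is being consistent with indices (ensuring the ``$i-1$'' shift matches the intended geometric rotation and that the non-edge $v_1 v_n$ is correctly excluded from both $S$ and $T$), so the count yields a valid swap index.
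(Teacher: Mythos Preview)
The paper does not give its own proof of this lemma; it is stated as a classical result with a citation to Bondy and Chv\'atal and then used as a black box. Your argument is the standard rotation/pigeonhole proof and is correct, including the linear-time claim.

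One small indexing slip is worth fixing. With your definition of $T$ restricted to $2 \le i \le n-1$, the neighbor $v_{n-1}$ of $v_n$ (always a neighbor, since $v_{n-1}v_n$ is a path edge) would correspond to $i = n$ and is therefore excluded; hence $|T| = d_G(v) - 1$, not $d_G(v)$. The conclusion is unaffected: you still get $|S \cap T| \ge d_G(u) + d_G(v) - 1 - (n-2) \ge 1$, which is all that is needed. Alternatively, let both $S$ and $T$ range over $\{2,\ldots,n\}$; then $|S| = d_G(u)$ (since $n \notin S$ as $v_1v_n \notin E(G)$), $|T| = d_G(v)$ (since $2 \notin T$ for the same reason), both lie in a set of size $n-1$, and $|S \cap T| \ge 1$ follows cleanly. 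You anticipated exactly this kind of off-by-one in your final remark.
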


It is easy to see that Lemma~\ref{bondy-chvatal} implies both Lemma~\ref{lemdirac} and Lemma~\ref{lemore}, as in both cases we can iterate the edge-augmentation step until obtaining a complete graph.

Finally, we state yet another strengthening of Dirac's theorem, due to Nash-Williams~\cite{nw}. We write this result in a slightly non-standard, explicitly algorithmic form. Our use of this result in proving Theorem~\ref{thm2} is the same as in the argument of H\"aggkvist~\cite{haggkvist}.

\begin{lemma}[Nash-Williams~\cite{nw}]\label{lemnw}
Let $G$ be a $2$-connected graph with $n$ vertices, with $\delta \geq (n+2)/3$. Then, we can find in $G$, in time $O(n^3)$, either a Hamiltonian cycle, or an independent set of size $\delta+1$.
\end{lemma}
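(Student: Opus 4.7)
The plan is to give a constructive, algorithmic version of Nash-Williams's original argument, which analyses a longest cycle of the graph. I maintain a current cycle $C$ in $G$, initially any cycle (one exists and is found in linear time because $G$ is $2$-connected). In each phase I either strictly enlarge $C$ by a local modification, or certify that no such enlargement is possible and output an independent set of size $\delta+1$. As soon as $V(C)=V(G)$ I return $C$ as the Hamiltonian cycle.

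The engine of one phase works as follows. Pick any $v \in V(G) \setminus V(C)$. By Menger's theorem applied to the $2$-connected graph $G$, in $O(n^2)$ time I compute two internally vertex-disjoint paths from $v$ to $V(C)$, meeting $C$ at distinct attachment points, and check whether these, together with either of the two arcs of $C$ between the attachment points, yield a cycle strictly longer than $C$; if so I replace $C$ and restart. More generally, I enumerate all neighbours of $v$ on $C$ and test, in $O(n^2)$ time, whether any standard Pósa-style local rearrangement along $C$ that incorporates $v$ produces a longer cycle.

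If no such enlargement exists, I extract the independent set as in Nash-Williams's proof. Writing $N_C(v) = \{c_{i_1}, \ldots, c_{i_t}\}$ in cyclic order and letting $c_{i_\ell - 1}$ denote the vertex immediately preceding $c_{i_\ell}$ on $C$, a short case analysis will show that, under the assumption that every local enlargement failed, any edge among the vertices $c_{i_\ell - 1}$, or any edge from $v$ to one of them, would induce a strictly longer cycle, a contradiction. Hence $\{v\} \cup \{c_{i_\ell - 1} : 1 \leq \ell \leq t\}$ is independent. The degree bound $\delta \geq (n+2)/3$, together with the fact that an off-cycle neighbour of $v$ would already supply an immediate enlargement, forces $t \geq \delta$, so this set has at least $\delta+1$ elements.

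Each phase strictly increases $|V(C)|$, so at most $n$ phases occur, each running in $O(n^2)$ time, for an overall $O(n^3)$ bound. The hard part will be the independent-set extraction: I must verify by a careful case analysis that the ``predecessor'' set is genuinely pairwise non-adjacent and that it attains size $\delta+1$. This is precisely where the hypothesis $\delta \geq (n+2)/3$ is used together with the $2$-connectivity of $G$, and it mirrors Nash-Williams's original structural argument.
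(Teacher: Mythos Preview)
Your framework---iteratively enlarge a cycle $C$ and, when stuck, extract an independent set from the cycle-predecessors of $N_C(v)$---matches the paper's (Bondy's) approach, but you have misplaced the difficulty and left the genuinely hard step unproved.

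The predecessor-set extraction is the \emph{easy} case: it works only once you know $N(v)\subseteq V(C)$, i.e.\ once $V(G)\setminus V(C)$ is already independent. Your justification for this, ``an off-cycle neighbour of $v$ would already supply an immediate enlargement,'' is not true in general. Two internally disjoint $v$--$C$ paths (from Menger) give a handle between two attachment points $a,b\in C$, but replacing the shorter $a$--$b$ arc by the handle need not produce a longer cycle; the handle may be shorter than both arcs. So when $V(G)\setminus V(C)$ contains an edge, nothing you wrote forces an enlargement, and then $|N_C(v)|$ can be strictly smaller than $\delta$, so your extracted set is too small.

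This non-independent case is exactly where the hypothesis $\delta\ge(n+2)/3$ does its real work in the paper's proof. There one first builds a length-$\ge 3$ path $P=(x_1,p_1,p_2,\ldots,x_{t+1})$ between two cycle vertices, with $p_1,p_2\notin C$, and then applies a three-way pigeonhole to the degrees of the triple $x_t,x_k,p_2$: since their degree sum exceeds $n+1$, one of three inequalities on their neighbour counts (inside the two arcs of $C$, or outside $C$) must hold, and each inequality yields, by a short case analysis, a concrete rerouting that strictly lengthens $C$. Your proposal neither states nor proves this step; without it the loop can stall with $V(G)\setminus V(C)$ non-independent and no independent set of size $\delta+1$ available.
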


The following proof of Lemma~\ref{lemnw} is due to Bondy~\cite{bondy}, sketched in~\cite[\S\,11]{Lawler1985}. We spell it out fully to make our discussion self-contained and to provide an explicitly algorithmic form (this requires only minor changes compared to the presentation in~\cite{Lawler1985}). 

\begin{proof}

All cycles considered in the proof are simple. Denote $V = V(G)$. Start with an arbitrary cycle $C$ of $G$ (the fact that $G$ is $2$-connected guarantees the existence of a cycle, and we can easily find one in linear time). We extend $C$ into successively longer cycles until we either (1) reach a Hamiltonian cycle, or (2) find an independent set of the required size. 

Unless we have already found a Hamiltonian cycle, $|C| \leq n-1$ holds. Suppose $V \setminus C$ is an independent set, and let $v \in V \setminus C$ be an arbitrary vertex. Due to the independence of $V \setminus C$, we have $N(v) \subseteq C$. If two neighbors of $v$ are connected by an edge of $C$, then we can immediately extend $C$ via $v$. Assume therefore, that this is not the case. Fix an arbitrary orientation of $C$, and let $N^{+}$ be the set of successors in $C$ of the vertices $N(v)$. Then, $|N^{+}| \geq \delta$. Again, if two vertices $x,y \in N^{+}$ are connected, then $C$ can be further extended (see Figure~\ref{fig1}(a) for illustration). Thus, we can assume that $N^{+} \cup \{v\}$ is an independent set of $G$, of the required size.

It remains to show that $C$ can be extended whenever $V \setminus C$ is \emph{not} an independent set. Then, $|V \setminus C| \geq 2$ must hold. Let us orient $C$ arbitrarily and label its vertices accordingly as $x_1, \dots, x_k$. Let $P = (x_1, p_1, p_2, \dots, x_{t+1})$, with $1 \leq t < k$, be a simple path of length at least $3$, that intersects $C$ only at the endpoints $x_1$ and $x_{t+1}$ (see Figure~\ref{fig1}(c)). We claim that the existence of such a path can be assumed without loss of generality (by suitably choosing the starting label $x_1$).

To see this, consider a path $(u,v,w)$ in $G$, where $u \in C$ and $v,w \in V \setminus C$. (Such a path must exist by the assumption that $V \setminus C$ is not independent and the fact that $G$ is $2$-connected: start with an arbitrary edge outside $C$ and consider a path from one of its endpoints to a vertex of $C$).

If there is a path, vertex-disjoint from $\{u,v\}$, from $w$ to an arbitrary vertex $x \in C \setminus \{u\}$, then we obtain the desired structure by labeling $x_1 = u$, $p_1 = v$, $p_2 = w$, and $x_{t+1} = x$. 

If there is no such path, then there must be a path $R$, internally vertex-disjoint from $\{u,v\}$, from $v$ to an arbitrary vertex $x \in C \setminus \{u\}$ (otherwise, deleting $u$ would disconnect $G$, contradicting its $2$-connectivity). Furthermore, there must exist a path $Q$ connecting $u$ to $w$, not containing $v$ (otherwise, deleting $v$ would separate $u$ and $w$) and internally vertex-disjoint from $C$ and $R$ (otherwise, we would have a path from $w$ to a vertex in $C \setminus \{u\}$, ruled out previously). Now, we obtain the desired structure by setting $x_1 = u$, $x_{t+1} = x$, and $p_1$ and $p_2$ the second and third vertices on the path consisting of $Q$, and the edge $wv$ (Figure~\ref{fig1}(b)).

Let $d^{\uparrow}_t$, $d^{\uparrow}_k$, $d^{\uparrow}_2$ denote the number of neighbors $x_i$ with $1 \leq i \leq t$ of $x_t$, $x_k$, resp.\ $p_2$. Similarly, let $d^{\downarrow}_t$, $d^{\downarrow}_k$, $d^{\downarrow}_2$ denote the number of neighbors $x_i$ with $t < i \leq k$ of $x_t$, $x_k$, resp.\ $p_2$. Let $d^{\circ}_t$, $d^{\circ}_k$, $d^{\circ}_2$ denote the number of neighbors in $V \setminus C$ of $x_t$, $x_k$, resp.\ $p_2$.

\begin{claim}
At least one of the following three inequalities holds:
\begin{enumerate}[(1)]
\item $d^{\uparrow}_t + d^{\uparrow}_k + d^{\uparrow}_2> t+1$, 
\item $d^{\downarrow}_t + d^{\downarrow}_k + d^{\downarrow}_2 > k-t+1$.
\item $d^{\circ}_t + d^{\circ}_k + d^{\circ}_2 > n-k-1$

\end{enumerate}
\end{claim}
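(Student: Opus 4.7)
The plan is to prove the claim by contradiction, using a simple sum-of-degrees argument that exploits the hypothesis $\delta \geq (n+2)/3$ of Lemma~\ref{lemnw}. First, I would observe that the three sets $\{x_1, \dots, x_t\}$, $\{x_{t+1}, \dots, x_k\}$, and $V \setminus C$ form a partition of $V$. Consequently, for each of the vertices $x_t$, $x_k$, $p_2$, its neighbors (none of which equal the vertex itself) are distributed across these three parts, which gives the identity
\[
d(x_t) + d(x_k) + d(p_2) \;=\; \bigl(d^{\uparrow}_t + d^{\uparrow}_k + d^{\uparrow}_2\bigr) + \bigl(d^{\downarrow}_t + d^{\downarrow}_k + d^{\downarrow}_2\bigr) + \bigl(d^{\circ}_t + d^{\circ}_k + d^{\circ}_2\bigr).
\]

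Combining this identity with the degree bound $d(x_t) + d(x_k) + d(p_2) \geq 3\delta \geq n+2$ then finishes the argument. Indeed, suppose for contradiction that all three inequalities in the claim fail; then the three bracketed sums above are bounded by $t+1$, $k-t+1$, and $n-k-1$, respectively, and adding these upper bounds yields $(t+1) + (k-t+1) + (n-k-1) = n+1 < n+2$, a contradiction. Hence at least one of (1), (2), (3) must hold.

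The argument has essentially no real obstacle to overcome; the only mild subtlety is verifying that the three neighbor-categories for $x_t$ (respectively $x_k$ and $p_2$) really do partition its neighbors. This is immediate once one notes that $\{x_1, \dots, x_t\}$ contains $x_t$ but $x_t$ is not adjacent to itself, that the analogous statement holds for $x_k$ in $\{x_{t+1}, \dots, x_k\}$, and that $p_2 \notin C$ so $d^{\circ}_2$ correctly counts only neighbors in $V \setminus C$. The pleasantly tight match between the assumed lower bound $3\delta \geq n+2$ and the derived upper bound $n+1$ is what forces the hypothesis of Lemma~\ref{lemnw} to take exactly the form $\delta \geq (n+2)/3$.
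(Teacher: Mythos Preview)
Your proof is correct and essentially identical to the paper's own argument: both proceed by contradiction, bound the total degree of $x_t$, $x_k$, $p_2$ above by $(t+1)+(k-t+1)+(n-k-1)=n+1$ if all three inequalities fail, and contradict the minimum-degree hypothesis $\delta \geq (n+2)/3$. The only cosmetic difference is that the paper divides by three to conclude some single vertex has degree at most $(n+1)/3 < \delta$, whereas you compare the sums directly.
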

\begin{claimproof}
Suppose otherwise. Then the sum of degrees of $x_t$, $x_k$, $p_2$ is at most $(t + 1) + (k-t+1) + (n-k-1) \leq n+1$, and thus at least one of them has degree at most $(n+1)/3 < \delta$, a contradiction. 
\end{claimproof}

In the following, we assume $d^{\uparrow}_2 \leq 1$, since, if $p_2$ is connected to some $x_i$ with $1 < i \leq t$, then we may choose a different index $t$ in our construction.

Suppose inequality (1) holds. Then, $d^{\uparrow}_t + d^{\uparrow}_k > t$, and by the pigeonhole-principle, there is some $1 \leq i < t$ such that $x_k x_{i+1}$ and $x_{t} x_i$ are edges of $G$. Then, $C$ can be extended by adding these two edges and the path $(x_1,p_1,\dots,x_{t+1})$ and removing the edges $x_k x_1$, $x_t x_{t+1}$, and $x_i x_{i+1}$. (See Figure~\ref{fig1}(c).)

Suppose inequality (2) holds. Then, there is some $t+1 \leq i < k$ such that one of the following is true: 
(a) $x_k x_{i}$ and $x_{t} x_{i+1}$ are edges of $G$, 
(b) $x_t x_{i}$ and $p_{2} x_{i+1}$ are edges of $G$, 
(c) $i < k-1$, and $x_k x_{i}$ and $p_{2} x_{i+2}$ are edges of $G$,
(d) $x_k x_{t}$ or $p_{2} x_{k}$ is an edge of $G$. 
In all these cases $C$ can be extended similarly to the previous case. (See Figure~\ref{fig1}(d) for an illustration of cases (a), (b), (c).) 

To see that one of the four cases must hold, we can argue by contradiction. Apart from the boundaries, every vertex $x_i$ that is connected to $x_t$ rules out $x_{i-1}$ from being connected to $x_k$. Similarly, every vertex $x_i$ that is connected to $p_2$ rules out $x_{i-2}$ from being connected to $x_k$, as well as $x_{i-1}$ from being connected to $x_t$. Thus, fixing the connections from $x_t$ and $p_t$ first, we rule out $d^{\downarrow}_t + d^{\downarrow}_k - 2$ possible neighbors of $x_k$. 

Suppose inequality (3) holds. Then, by a similar pigeonhole-argument, one of the following is true: 
(a) $x_t p_j$ or $x_k p_j$ is an edge of $G$ for some internal vertex $p_j$ on path $P$, or
(b) at least two of $x_t w$, $x_k w$, and $p_2 w$ are edges of $G$, for some $w \in V \setminus (C \cup P)$.
In all these cases $C$ can be extended similarly to the previous cases.

The claimed running time can be achieved via a straightforward implementation. \qedd
\end{proof}

\begin{figure}[h]
  \centering
  \includegraphics[width=14cm]{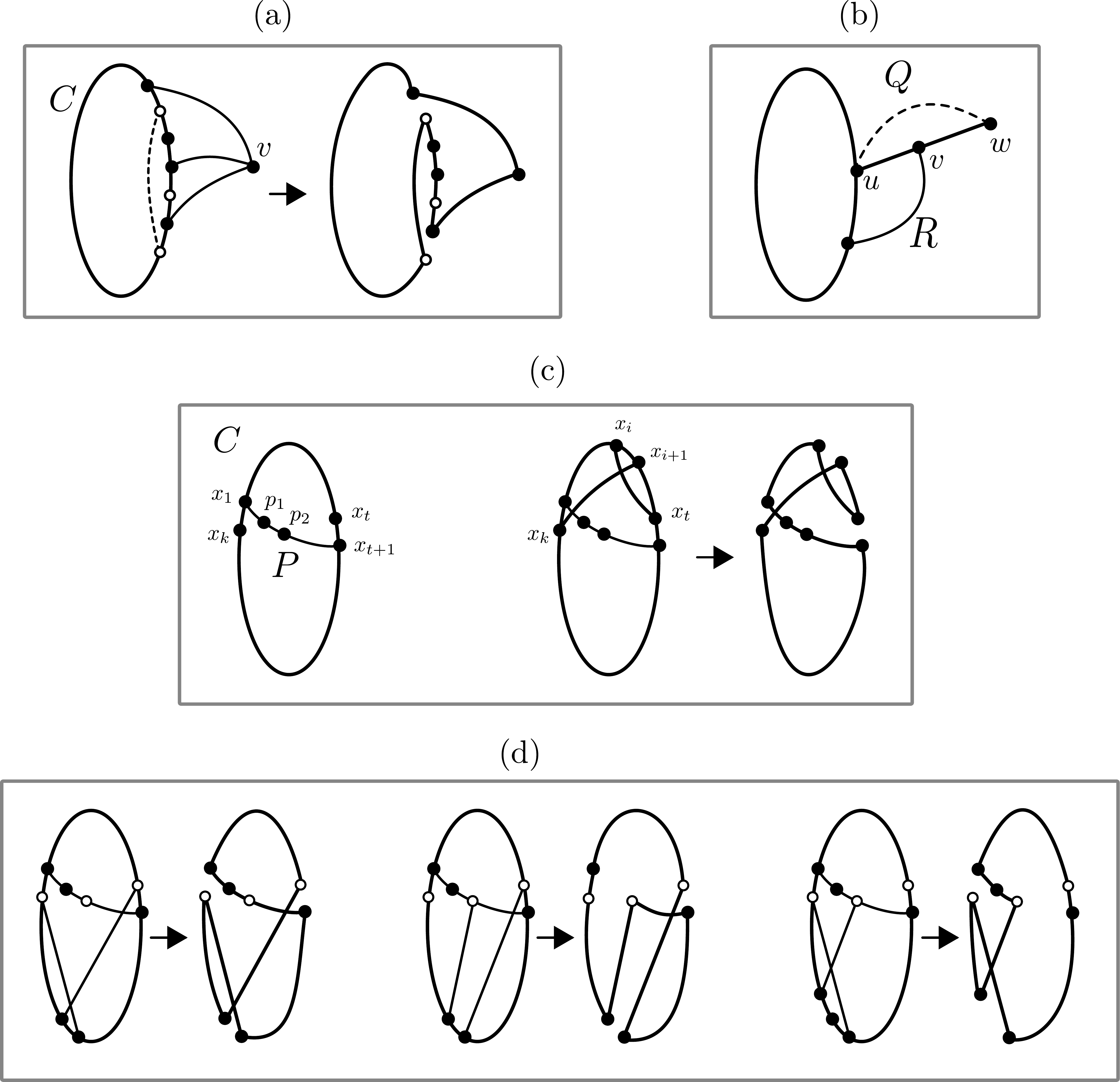}
  \caption{Illustration of the proof of Lemma~\ref{lemnw}. (a) cycle $C$ with sets $N(v)$ and $N^{+}$ of vertices shown as solid, resp.\ hollow circles; (b) constructing a path between two vertices of $C$; (c) one case of extending $C$; (d) extending $C$ in the first three cases, when inequality (2) holds. Vertices $x_t$, $x_k$, $p_2$ shown as hollow dots.\label{fig1}}
\end{figure}

\section{Relaxing the cardinality-constraint (proof of Theorem~\ref{thm1})}\label{sec2}

Let $C \subseteq V(G)$ denote the set of high-degree vertices of $G$ (those with degree at least $n/2$), and let $S = V(G) \setminus C$ denote the remaining (i.e.\ low-degree) vertices.

Observe that $|S| \leq k$. By Lemma~\ref{bondy-chvatal}, we may add all edges between vertices in $C$, without changing the Hamiltonicity of $G$, assume therefore that $C$ is a clique. 

The proof of the following theorem is inspired by the \emph{crown reductions}~\cite{Abu-KhzamFLS07,ChorFJ04,Fellows03} used to obtain kernels for \textsc{Vertex Cover} and \textsc{Saving $k$ Colors}.

\begin{theorem} \label{thm:dist:to:clique}
There is a polynomial-time algorithm that, given a graph~$G$ and a nonempty set~$S \subseteq V(G)$ such that~$G-S$ is a clique, outputs an induced subgraph~$G'$ of~$G$ on at most~$3|S|$ vertices such that~$G$ is Hamiltonian if and only if~$G'$ is Hamiltonian.
\end{theorem}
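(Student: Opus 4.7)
The plan is a crown-style kernelization via a matching in an auxiliary bipartite graph. If $|C|\le 2|S|$ the statement is immediate, so assume $|C|>2|S|$, and by Lemma~\ref{bondy-chvatal} assume $C := V(G)\setminus S$ is a clique. Build a bipartite graph $H$ whose left side $A = S\times\{1,2\}$ consists of two ``slots'' per $s\in S$ (representing the two cycle-edges that may leave $s$ toward $C$) and whose right side $\tilde C = C\times\{1,2\}$ is a double cover of $C$ (so that a single vertex of $C$ can serve as a singleton arc between two different $S$-vertices). Put an edge $(s,i)(c,j)$ in $H$ whenever $sc \in E(G)$, compute a maximum matching $\tilde M$ of $H$ in polynomial time, set $C_M := \{c \in C : (c,j)\in V(\tilde M) \text{ for some } j\}$ (so $|C_M|\le 2|S|$), pad $C_M$ by arbitrary elements of $C$ to a set $C'\subseteq C$ of size exactly $2|S|$, and return $G' := G[S \cup C']$; then $|V(G')|\le 3|S|$.

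For the reverse direction, any Hamiltonian cycle of $G'$ has at most $2|S|$ edges touching $S$ and therefore at least $|C'|-|S| = |S|>0$ edges inside $C'$; replacing one such edge $c_1c_2$ by a Hamiltonian path of $G[C\setminus C']$ from $c_1$ to $c_2$ (which exists as $C$ is a clique) produces a Hamiltonian cycle of $G$. For the forward direction, take a Hamiltonian cycle $\Gamma$ of $G$ and let $D\subseteq A$ be the set of slots at which $\Gamma$ crosses from $S$ to $C$; the $S$-$C$ edges of $\Gamma$, with copy indices chosen consistently, form a matching $M_\Gamma$ of $H$ saturating $D$. My key goal is to produce a second matching $M^* \subseteq E(H)$ with the same $A$-saturation as $M_\Gamma$, i.e.\ $V(M^*)\cap A = D$, but with $V(M^*)\cap \tilde C \subseteq V(\tilde M)$. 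Given $M^*$, I reconstruct a Hamiltonian cycle of $G'$ by keeping the $S$-part of $\Gamma$ verbatim and routing each of its $t\le|S|$ maximal $C$-arcs through $C'$, with endpoints dictated by the projection of $M^*$ to $C$; because $|C'|=2|S|\ge 2t$ and $C' \subseteq C$ is a clique, assembling $C'$ into $t$ paths with the prescribed endpoints is straightforward.

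The main obstacle is establishing the existence of $M^*$. My plan is a standard alternating-path swap on the symmetric difference $\tilde M \triangle M_\Gamma$ inside $H$: since $\tilde M$ is a maximum matching, no component of the symmetric difference is an augmenting path for $\tilde M$, which forces every path component to end on the $A$-side rather than at an unmatched vertex of $\tilde C$. Exchanging the roles of $\tilde M$- and $M_\Gamma$-edges along each path component whose $\tilde C$-endpoint lies outside $V(\tilde M)$ produces a matching that saturates $D$ on the $A$-side but uses only $\tilde C$-vertices inside $V(\tilde M)$, which is the required $M^*$. The careful bookkeeping lies in managing copy indices so that each arc of $\Gamma$ is correctly realized either as a singleton arc (both slots mapped to the two copies of a single $c\in C_M$) or as a longer arc (slots mapped to different $C_M$-vertices); this is the same alternating-path manipulation underlying crown reductions such as the ones used for \textsc{Vertex Cover}.
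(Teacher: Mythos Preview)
Your plan is the same crown-style reduction as the paper's, and the reverse direction and the size bound are fine. The paper, however, builds the auxiliary bipartite graph with only \emph{one} copy of each clique vertex (so~$H$ lives on $C\cup(S\times\{1,2\})$), and this difference is where your argument develops a gap.

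The symmetric-difference swap does produce a matching $M^*$ with $D\subseteq V(M^*)\cap A$ and $V(M^*)\cap\tilde C\subseteq V(\tilde M)$ (your sentence ``every path component ends on the $A$-side'' is not literally correct, but the correct statement---no component is $\tilde M$-augmenting---yields what you need). The problem is the reconstruction. After projecting $M^*$ to $C$, a single vertex $c\in C_M$ may receive \emph{two} slots, namely the $M^*$-partners of $(c,1)$ and $(c,2)$. In $M_\Gamma$ this happens only when $c$ was a singleton $C$-arc of $\Gamma$, so the two slots are consecutive along $\Gamma$; but once you mix in $\tilde M$-edges, nothing forces the two slots landing on $c$ to come from the two sides of one and the same $C$-arc. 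If they come from the two endpoints of a single $S$-arc $P$, your edge set $F$ contains the short cycle $c\!-\!P\!-\!c$, and ``assembling $C'$ into $t$ paths with the prescribed endpoints'' fails because the prescribed endpoints are not distinct. Your final paragraph flags exactly this as ``careful bookkeeping'', but there is no mechanism in the swap that guarantees the required consistency of copy indices.

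The paper sidesteps this by never doubling~$C$. With a single copy of each $c$, the maximum matching $M$ assigns at most one slot to any $c\in C^*$; the paper then partitions $S$ via alternating-path reachability from the $M$-unsaturated clique vertices into a set $R_S$ (whose slots are re-routed to the distinct matching partners in~$R_C$) and $S\setminus R_S$ (whose original $F$-edges into~$C$ are shown to land in $C^*\setminus R_C$ and are kept verbatim). Because the re-routed edges hit degree-one vertices of $R_C$ while the kept edges lie on the original Hamiltonian cycle, the resulting structure is automatically acyclic. Dropping the second copy of~$C$ and arguing via this reachability partition, rather than via a global swap, is the cleanest fix.
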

\begin{proof}
Given a graph~$G$ let $S \subseteq V$, such that $C := V(G) \setminus S$ is the vertex set of a clique in $G$. If~$C \leq 2|S|$ then~$G' := G$ suffices, so we assume~$C > 2|S|$ in the remainder. Let~$S' := \{v_1, v_2 \mid v \in S\}$ be a set containing two representatives for each vertex of~$S$. Construct a bipartite graph~$H$ on vertex set~$C \cup S'$. For each edge~$cv \in E(G)$ with~$c \in C$ and~$v \in S$, add the edges~$c v_1, c v_2$ to~$H$. Compute a maximum matching~$M \subseteq E(H)$ in graph~$H$, for example using the Edmonds-Karp algorithm. Let~$C^*$ be the vertices of~$C$ saturated (matched) by~$M$. If~$|C^*| \geq |S|+1$ then let~$C' := C^*$, and otherwise let~$C' \subseteq C$ be a superset of~$C^*$ of size~$|S| + 1$. Output the graph~$G' := G[C' \cup S]$ as the result of the reduction.

\begin{claim} \label{claim:size}
Graph~$G'$ has at most~$3|S|$ vertices.
\end{claim}
\begin{claimproof}
Since each vertex of~$C^*$ is matched to a distinct vertex in~$S'$, with~$|S'| = 2|S|$, it follows that~$|C^*| \leq 2|S|$ which implies~$|C'| \leq 2|S|$. As~$V(G') = C' \cup S$, the claim follows. 
\end{claimproof}

The output graph~$G'$ therefore satisfies the size bound. It remains to prove that it is equivalent to~$G$ with respect to Hamiltonicity. We first prove the simpler implication.

\begin{claim} \label{claim:prime:ham:g:ham}
If~$G'$ is Hamiltonian, then~$G$ is Hamiltonian.
\end{claim}
\begin{claimproof}
Suppose that~$G'$ is Hamiltonian, and let~$F \subseteq E(G)$ be a Hamiltonian cycle in~$G'$. Fix an arbitrary orientation of~$F$. As each vertex from~$C'$ has a unique successor on~$F$, while~$|C'| > |S|$ by definition, it follows that some vertex~$x \in C'$ has a successor from~$C'$ along the cycle; let this be~$y \in C'$. Then we can transform~$F$ into a Hamiltonian cycle in~$G$ by removing the edge~$xy$ and replacing it by a path through all the clique-vertices of~$C \setminus C'$. 
\end{claimproof}

The remainder of the proof is aimed at proving the reverse implication. For this, we introduce some terminology. For a vertex set~$S^*$ in a graph~$G^*$, we define a \emph{path cover of~$S^*$ in~$G^*$} as a set of pairwise vertex-disjoint simple paths~$P_1, \ldots, P_\ell$ in~$G^*$, such that each vertex of~$S^*$ belongs to exactly one path~$P_i$. For a vertex set~$C^*$ in~$G^*$, we say the path cover \emph{has $C^*$-endpoints} if the endpoints of each path~$P_i$ belong to~$C^*$. We will sometimes interpret a subgraph in which each connected component is a path as a path cover, in the natural way.

\begin{claim} \label{claim:pathcover:to:cycle}
If there is a path cover of~$S$ in~$G'$ having~$C'$-endpoints, then~$G'$ is Hamiltonian.
\end{claim}
\begin{claimproof}
Any path cover of~$S$ consists of at least one path (since~$S$ is nonempty by assumption) and the endpoints of the paths are all distinct. Hence a path cover consisting of~$\ell \geq 1$ paths has exactly~$2\ell$ distinct endpoints~$\{s_1, t_1, \ldots, s_\ell, t_\ell\}$, which are vertices in the clique~$C'$. Let~$P_{\ell+1}$ be a simple path in~$G'$ visiting all vertices that are not touched by the path cover; such a path exists because the only vertices not touched by the path cover belong to the clique~$C'$. Then one can obtain a Hamiltonian cycle in~$G'$ by taking the edges of~$P_1, \ldots, P_\ell, P_{\ell+1}$, together with edges connecting the end of path~$P_i$ to the beginning of path~$P_{i+1}$ for all relevant values of~$i$. 
\end{claimproof}

To prove that Hamiltonicity of~$G$ implies Hamiltonicity of~$G'$, we will construct a path cover of~$S$ in~$G'$ having~$C'$-endpoints, using a hypothetical Hamiltonian cycle in~$G$. To do so we need several properties enforced by the matching~$M$ in~$H$, which we now explore.

Let~$U_C$ be the vertices of~$C$ that are not saturated by~$M$. Let~$R$ denote the vertices of~$H$ that are reachable from~$U_C$ by an $M$-alternating path in the bipartite graph $H$ (which necessarily starts with a non-matching edge), and define~$R_C := R \cap C$ and~$R_{S'} := R \cap S'$.

\begin{claim} \label{claim:properties:matching}
The sets~$R, R_C, R_{S'}$ satisfy the following.
\begin{enumerate}
	\item Each $M$-alternating path in~$H$ from~$U_C$ to a vertex in~$R_{S'}$ (resp.~$R_C$) ends with a non-matching (resp.~matching) edge.\label{eq:last:edge}
	\item Each vertex of~$R_{S'}$ is matched by~$M$ to a vertex in~$R_{C}$. \label{eq:sprime:saturated}
	\item For each vertex~$x \in R_C$ we have~$N_H(x) \subseteq R_{S'}$. \label{eq:neighborsrc:reachable}
	\item For each vertex~$v \in S$ we have~$v_1 \in R_{S'} \Leftrightarrow v_2 \in R_{S'}$.\label{eq:copies:behave:alike}
	\item For each vertex~$v \in S' \setminus R_{S'}$, we have~$N_H(v) \cap R_{C} = \emptyset$ and each vertex of~$N_H(v)$ is saturated by~$M$.\label{eq:neighbors:notreachable}
\end{enumerate}
\end{claim}
\begin{claimproof}
\eqref{eq:last:edge} An $M$-alternating path starting in~$U_C$ must start with a non-matching edge, since~$U_C$ consists of unsaturated vertices, and it starts from the $C$-partite set of~$H$. Hence such a path moves to the~$S'$-partite set over non-matching edges, and moves back to the $C$-partite set over matching edges.

\eqref{eq:sprime:saturated} If a vertex~$x \in R_{S'} \subseteq R$ is not saturated, then the $M$-alternating path from~$U_C$ witnessing~$x \in R$ starts and ends with a non-matching edge (by \eqref{eq:last:edge}) and is in fact an $M$-augmenting path. This contradicts that~$M$ is a maximum matching. Hence each~$x \in R_{S'}$ is matched by~$M$ to some vertex~$y$. By \eqref{eq:last:edge} the $M$-alternating path from~$U_C$ to~$x$ that witnesses~$x \in R_{S'}$ ends with a non-matching edge, so together with the matching edge~$\{x,y\}$ this forms an $M$-alternating path witnessing~$y \in R_C$.

\eqref{eq:neighborsrc:reachable} 
Consider a vertex~$x \in R_C$ and an $M$-alternating path~$P$ from~$U_C$ witnessing~$x \in R$. By \eqref{eq:last:edge} the last edge on~$P$ (if any) is a matching edge. Hence if~$x$ is saturated by~$M$, then its matching partner~$y$ is the predecessor of~$x$ on~$P$ and a prefix of~$P$ witnesses~$y \in R$ and hence~$y \in R_{S'}$. For any vertex~$z \in N_H(x)$ that is not the matching partner of~$x$, we can augment~$P$ by the edge~$xz$ to obtain an $M$-alternating path from~$U_C$ to~$z$ witnessing~$z \in R_{S'}$. Together, these two arguments show~$N_H(x) \subseteq R_{S'}$.

\eqref{eq:copies:behave:alike}
Suppose~$v_1 \in R_{S'}$ and let~$P$ be an $M$-alternating path from~$U_C$ to~$v_1$. By~\eqref{eq:last:edge} path~$P$ ends with a non-matching edge~$x v_1$. Since~$v_1$ and~$v_2$ have identical neighborhoods in~$H$, we can replace the last edge of~$P$ by~$x v_2$ to obtain an $M$-alternating path witnessing~$v_2 \in R_{S'}$. The case that~$v_2 \in R_{S'}$ is symmetric.

\eqref{eq:neighbors:notreachable}
Consider~$v \in S' \setminus R_{S'}$. If~$x \in N_H(v) \cap R_{C}$, then~\eqref{eq:neighborsrc:reachable} implies~$v \in R_{S'}$, a contradiction. Hence~$N_H(v) \cap R_{C} = \emptyset$. An unsaturated $H$-neighbor~$x$ of~$v$ would imply~$x \in N_H(v) \cap U_C \subseteq N_H(v) \cap R_C$, so each vertex of~$N_H(v)$ is saturated by~$M$. 
\end{claimproof}

Using these structural insights we can now prove the desired converse to Claim~\ref{claim:prime:ham:g:ham}. Before we give the formal proof, we present the main idea. To prove that~$G'$ is Hamiltonian if~$G$ is, we take a Hamiltonian cycle~$F$ in~$G$ and turn it into a path cover of~$S$ in~$G'$ with~$C'$-endpoints. Any Hamiltonian cycle~$F$ in~$G$ yields a path cover of~$S$ with~$S$-endpoints, by simply taking the restriction of~$F$ onto the vertices of~$S$. The challenge is to extend this path cover with edges into~$C'$ to give it the desired $C'$-endpoints: if the Hamiltonian cycle~$F$ used an edge to jump from~$S$ to~$C$, we have to provide a similar jump in~$G'$. If~$F$ jumps from a vertex~$v \in S$ whose corresponding copies~$v_1, v_2 \in S'$ do not belong to~$R_{S'}$, then the $C$-endpoint of the jumping edge is saturated by~$M$, belongs to~$C'$ and therefore to~$G'$, and can be used to provide the analogous jump in~$G'$. On the other hand, for all vertices~$v \in S$ whose copies~$v_1, v_2$ belong to~$R_{S'}$, we will globally assign new jumping edges based on the matching~$H$. The properties of a matching will ensure that these jumping edges lead to distinct targets and give a valid path cover of~$S$ in~$G'$ having~$C'$-endpoints. We now formalize these ideas.

\begin{claim} \label{claim:g:to:prime}
If~$G$ is Hamiltonian, then~$G'$ is Hamiltonian.
\end{claim}
\begin{claimproof}
Let~$F$ be a Hamiltonian cycle in~$G$. By Claim~\ref{claim:pathcover:to:cycle} it suffices to build a path cover of~$S$ in~$G'$ with~$C'$-endpoints. View~$F$ as a 2-regular subgraph of~$G$, and let~$F_1 := F[S]$ be the subgraph of~$F$ induced by~$S$. Since~$F$ spans~$G$ and all vertices of~$S$ are present in~$G'$, it follows that~$F_1$ is a path cover of~$S$ in~$G'$. However, the paths in~$F_1$ have their endpoints in~$S$ rather than in~$C'$. We resolve this issue by inserting edges into~$F_1$ to turn it into an acyclic subgraph~$F_2$ of~$G'$ in which each vertex of~$S$ has degree exactly two. This structure~$F_2$ must be a path cover of~$S$ in~$G'$ with~$C'$-endpoints, since the degree-two vertices~$S$ cannot be endpoints of the paths. To do the augmentation, initialize~$F_2$ as a copy of~$F_1$. Define~$R_S := \{v \in S \mid v_1 \in R_{S'} \vee v_2 \in R_{S'}\}$ and proceed as follows.
\begin{itemize}
	\item For each vertex~$v \in R_S$, we have~$v_1, v_2 \in R_{S'}$ by Claim~\ref{claim:properties:matching}\eqref{eq:copies:behave:alike}, which implies by Claim~\ref{claim:properties:matching}\eqref{eq:sprime:saturated} that both~$v_1$ and~$v_2$ are matched to distinct vertices~$x_1, x_2$ in~$R_C$. If~$v$ has degree zero in subgraph~$F_1$, then add the edges~$vx_1$, $vx_2$ to~$F_2$. If~$v$ has degree one in~$F_2$ then only add the edge~$vx_1$. Do not add any edges if~$v$ already has degree two in~$F_1$.
	\item For each vertex~$v \in S \setminus R_S$, we claim that~$N_G(v) \cap R_C = \emptyset$. This follows from the fact that~$N_G(v) = N_H(v_1) = N_H(v_2)$ and~Claim~\ref{claim:properties:matching}\eqref{eq:neighbors:notreachable}, using that~$v \notin R_S$ implies~$v_1, v_2 \notin R_{S'}$. Hence the (up to two) neighbors that~$v \in S \setminus R_S$ has in~$C$ on the Hamiltonian cycle~$F$ do not belong to~$R_C$, while~Claim~\ref{claim:properties:matching}\eqref{eq:neighbors:notreachable} ensures that all vertices of~$N_G(v)$ are saturated by~$H$ and hence belong to~$C'$. For each vertex~$v \in S \setminus R_S$, for each edge from~$v$ to~$C \cap C'$ incident on~$v$ in~$F$, we insert the corresponding edge into~$F_2$. 
\end{itemize}

It is clear that the above procedure produces a subgraph~$F_2$ in which all vertices of~$S$ have degree exactly two. To see that~$F_2$ is indeed a path cover, having no vertex of degree larger than two, it suffices to notice that the edges inserted for~$v \in R_S$ connect to \emph{distinct} vertices in~$C' \cap R_C$, while the edges inserted for~$v \in S \setminus R_S$ connect to~$C' \setminus R_C$ in the same way as in the Hamiltonian cycle~$F$. Hence~$F_2$ forms a path cover of~$S$ in~$G'$ having~$C'$-endpoints, which implies that~$G'$ is Hamiltonian and proves Claim~\ref{claim:g:to:prime}. 
\end{claimproof}
Claims~\ref{claim:prime:ham:g:ham} and~\ref{claim:g:to:prime} prove the correctness of the reduction and Claim~\ref{claim:size} gives the desired size bound. Since the reduction can easily be performed in polynomial time, this completes the proof of Theorem~\ref{thm:dist:to:clique}. \qedd
\end{proof}

Observe that the proof of Lemma~\ref{claim:prime:ham:g:ham} explicitly constructs the Hamiltonian cycle in case of a ``yes''-answer. The running time of the reduction is dominated by the bipartite matching step, and the process of undoing the Bondy-Chv\'atal augmentations (Lemma~\ref{bondy-chvatal}), if a cycle of the original graph is to be constructed. Both tasks can be performed in time $O(n^3)$.

\section{Relaxing the degree-constraint (proof of Theorem~\ref{thm2})}\label{sec3}

The outline of the proof largely follows an earlier argument of H\"aggkvist~\cite{haggkvist}. We improve the $O(n^{5k})$ running time of H\"aggkvist's algorithm to $c^k \cdot n^{O(1)}$. 

The algorithm either finds a Hamiltonian cycle or constructs a certificate of non-Hamiltonicity, in the form of a cut $(S,T)$ of the graph, such that the vertices of $T$ can not be covered by $|S|$ vertex-disjoint paths, and this certificate can be verified within the required running time. (Observe that a Hamiltonian cycle induces such a path-cover for an arbitrary cut; paths consisting of single vertices are allowed.)

Assume that $k < n/34$, and thus $\delta > 8n/17$. (Otherwise we revert to a standard exponential-time algorithm.) Furthermore, $\delta < n/2$ may be assumed, as otherwise $G$ is Hamiltonian by Dirac's theorem. Also assume that $G$ is $2$-connected (otherwise it is not Hamiltonian).

Start by running the procedure from the proof of Lemma~\ref{lemnw}, either obtaining a Hamiltonian cycle, or an independent set of size $\delta + 1$. Assume that the latter is the case, and label the obtained independent set as $A_1$.

Partition $V(G)$ into sets $A_1$, $A_2$, and $A_3$, where $A_2$ denotes the set of vertices in $v \in V(G) \setminus A_1$ such that $|N(v) \cap A_1| \geq \delta/2$, and $A_3 = V(G) \setminus (A_1 \cup A_2)$. In words, $A_2$ contains vertices that are sufficiently highly connected to the obtained independent set, and $A_3$ contains the remaining vertices. 

\begin{lemma}[\cite{haggkvist}, Thm.\ 2] \label{lemhaggkvist}
Given sets $A_1$, $A_2$, $A_3$ as defined, we can find a set of vertices $S \subseteq V(G)$ such that $|S| \geq 3 \delta -n +2$, and $G[V(G) \setminus S]$ can be covered by $|S|$ vertex-disjoint paths if and only if $G$ is Hamiltonian. 
\end{lemma}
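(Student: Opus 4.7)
The plan is to take $S := A_1$, the large independent set produced by Lemma~\ref{lemnw}, and verify each direction of the equivalence using the structural data of $A_2$ and $A_3$. For the size bound, $|S| = \delta + 1$ and the inequality $|S| \geq 3\delta - n + 2$ is equivalent to $n \geq 2\delta + 1$, which holds by the outer assumption $\delta < n/2$ (otherwise Dirac's theorem already yields a Hamiltonian cycle).

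For the easy direction, suppose $G$ has a Hamiltonian cycle $H$. Since $A_1$ is independent, no two $S$-vertices are adjacent along $H$, so the $|S|$ occurrences of $S$ on $H$ split it into exactly $|S|$ non-empty arcs (each possibly a single vertex) whose edges lie in $G[T]$, where $T := V(G) \setminus S$; these arcs form the required cover of $G[T]$ by $|S|$ vertex-disjoint paths.

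For the converse, suppose $G[T]$ is covered by paths $P_1, \dots, P_m$ with $m \leq |S|$. The goal is to build a Hamiltonian cycle of $G$ by cyclically ordering the $P_i$'s and weaving in the $|S|$ vertices of $A_1$ as connectors. I would encode this as a bipartite system of distinct representatives: on one side sit the vertices of $A_1$; on the other sit ``insertion slots''---each gap between two consecutive paths in a chosen cyclic order (where an $A_1$-vertex joins two path endpoints), and each edge inside a path (where an $A_1$-vertex can be spliced between the two endpoints of the edge, effectively splitting the path). An $A_1$-vertex $a$ is compatible with a slot if $a$ is adjacent in $G$ to both endpoints of the slot. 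A system of distinct representatives extracted via Hall's theorem then translates directly into a Hamiltonian cycle of $G$.

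The main obstacle is certifying Hall's condition for this bipartite structure. The degree conditions defining $A_2$ and $A_3$ are tailor-made for this: every $A_2$-vertex has $\geq \delta/2$ neighbors in $A_1$, so $A_2$-endpoints have large $A_1$-neighborhoods; an $A_3$-endpoint is more delicate, but any $A_3$-vertex has $> \delta/2$ neighbors inside $T = A_2 \cup A_3$, so after rerouting along edges within $T$ it can be treated as an interior vertex of a path rather than an endpoint. Combined with the bound $\delta > 8n/17$ inherited from the outer context of Section~\ref{sec3}, these estimates should provide enough slack to close Hall's condition in every case and complete the proof.
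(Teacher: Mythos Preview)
Your choice $S := A_1$ differs from the paper's, which sets $S = A_2$ if $|A_2| \le |A_1 \cup A_3|$ and $S = A_1 \cup A_3$ otherwise, and then argues via iterated Bondy--Chv\'atal closure rather than Hall's theorem. That difference is not merely cosmetic; your approach has a genuine gap in the hard direction.

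The core problem is the compatibility condition for your ``insertion slots'': an $A_1$-vertex $a$ is usable at a slot only if it is adjacent to \emph{both} endpoints of that slot. The degree hypotheses give you $|N(u)\cap A_1|\ge \delta/2$ for each $A_2$-endpoint $u$, but $|A_1|=\delta+1$, so inclusion--exclusion yields only
\[
|N(u)\cap N(v)\cap A_1|\;\ge\;\tfrac{\delta}{2}+\tfrac{\delta}{2}-(\delta+1)\;=\;-1,
\]
which says nothing. It is easy to build instances with two $A_2$-endpoints whose $A_1$-neighbourhoods partition $A_1$, so a gap between them has \emph{no} compatible $A_1$-vertex and Hall fails outright. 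The bound $\delta>8n/17$ does not rescue this. Two further issues compound the difficulty: the gap slots depend on a cyclic order of the paths that you have not specified and cannot afford to enumerate, and you must fill \emph{every} gap (not just find some $A_1$-saturating matching), which is a two-sided constraint that plain Hall does not capture. The proposed ``rerouting'' of $A_3$-endpoints through $T$ is also too vague to repair these points.

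The paper sidesteps all of this by choosing $S$ so that it can be turned into a clique in an auxiliary graph $B'$ without affecting Hamiltonicity, and then showing that Bondy--Chv\'atal closure adds \emph{every} $S$--$T$ edge; this uses only degree \emph{sums} across the cut, never neighbourhood intersections, which is exactly what the hypotheses control. Once all $S$--$T$ edges are present, the $|S|$ paths covering $T$ are stitched into a Hamiltonian cycle trivially.
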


We sketch the argument, referring to H\"aggkvist~\cite[p.\ 32-33]{haggkvist} for the full details.

\begin{proof}[Proof sketch]
Let $T = V(G) \setminus S$. The ``if'' direction is trivial, since every Hamiltonian cycle of $G$ induces a cover of $G[T]$ by $|S|$ vertex-disjoint paths, for arbitrary $S$.

It remains to show the converse, i.e.\ that a cover of $G[T]$ by $|S|$ vertex-disjoint paths can be extended into a Hamiltonian cycle of $G$, for a suitably chosen $S$.

We let $S=A_2$ if $|A_2| \leq |A_1 \cup A_3|$, and $S = A_1 \cup A_3$ otherwise. 

Consider the bipartite graph with sides $S$ and $T$ containing only those edges of $G$ that have one endpoint in $S$ and another endpoint in $T$. Add to this bipartite graph the edges of the path-cover of $T$ by $|S|$ vertex-disjoint paths (assuming such a cover exists). Call this graph $B$.
Let $B'$ be the graph obtained from $B$ by connecting all pairs of vertices in $S$.

Observe that $B$ is Hamiltonian if and only if $B'$ is Hamiltonian (furthermore, in case they are Hamiltonian, they admit the exact same Hamiltonian cycles). The ``only if'' case is obvious since $E(B') \supseteq E(B)$. In the other direction, $B'[T]$ contains $|S|$ disjoint components, therefore a Hamiltonian cycle of $B'$ can not traverse any edge of $B'[S]$. (This is because for an arbitrary Hamiltonian cycle $H$ of $B'$ the induced graphs $H[S]$ and $H[T]$ have the same number of connected components.)

Now we form graph $B''$ from $B'$, by repeatedly applying the Bondy-Chv\'atal theorem (Lemma~\ref{bondy-chvatal}), adding edges with one endpoint in $S$ and one endpoint in $T$. It can be shown that this results in adding all edges between $S$ and $T$.

As $B''$ fully connects $S$ and $T$, it is clearly Hamiltonian, as we can traverse the $|S|$ vertex-disjoint paths in $B''[T]$ and link them together through hops via arbitrary vertices of $S$ (using each vertex in $S$ exactly once).

The claim on the size of $S$ and the fact that all possible edges between $S$ and $T$ satisfy the degree-conditions of the Bondy-Chv\'atal Theorem (Lemma~\ref{bondy-chvatal}) follow from a delicate counting argument, which crucially relies on the fact that $8n/17 < \delta < n/2$. We refer to~\cite{haggkvist} for the details. \qedd
\end{proof}

It remains to verify whether $G[T]$ can be covered by $|S|$ vertex-disjoint paths. 

\begin{lemma}\label{lempathcover}
Given an $n$-vertex graph $G$, we can find in time $O(c^t \cdot n^3)$ a cover of $G$ with $n-t$ vertex-disjoint paths, or report that no such cover exists, for arbitrary $c > (2e)^{2}$.
\end{lemma}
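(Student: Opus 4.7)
The plan is to observe that finding a vertex-disjoint path cover of $G$ with $n - t$ paths is equivalent to finding a set $F \subseteq E(G)$ of $t$ edges for which $(V(G), F)$ is a linear forest (acyclic, maximum degree two); the remaining isolated vertices become trivial one-vertex paths of the cover. Such an $F$ involves at most $2t$ non-isolated vertices, since each of the $t$ edges contributes at most two new endpoints. I would therefore apply the color-coding framework of Alon--Yuster--Zwick, adapted from $k$-paths to $t$-edge linear forests: given a coloring $\chi : V(G) \to [2t]$, a linear forest is $\chi$-colorful if its non-isolated vertices receive pairwise distinct colors. For any fixed target $F^*$ on at most $2t$ vertices, Stirling gives $\Pr_\chi[F^* \text{ is } \chi\text{-colorful}] \geq (2t)!/(2t)^{2t} = \Omega(e^{-2t}/\sqrt{t})$ under a uniform random $\chi$. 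To obtain a deterministic algorithm, I would invoke a perfect hash family (Naor--Schulman--Srinivasan) $\mathcal{F}$ of size $e^{2t} \cdot t^{O(\log t)} \cdot \log n$ such that every $2t$-subset of $V(G)$ is mapped injectively by some $\chi \in \mathcal{F}$, and iterate the search over every $\chi \in \mathcal{F}$.

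For each $\chi \in \mathcal{F}$ the search for a $\chi$-colorful $t$-edge linear forest is done by a standard subset DP, viewing the forest as a single vertex sequence obtained by concatenating its paths in arbitrary order, with ``break'' markers inserted between consecutive paths. Define $f[C, v]$, for $C \subseteq [2t]$ and $v \in V(G)$ with $\chi(v) \in C$, to be the minimum number of breaks in such a colorful sequence whose vertex colors form exactly $C$ and whose last vertex is $v$. The base case is $f[\{\chi(v)\}, v] = 0$ for every $v \in V(G)$. Transitions append one vertex at a time: either extend the current path along an edge $vu \in E(G)$ with $\chi(u) \notin C$ via $f[C \cup \{\chi(u)\}, u] \leq f[C, v]$, or insert a break and start a new path at any $u$ with $\chi(u) \notin C$ via $f[C \cup \{\chi(u)\}, u] \leq f[C, v] + 1$. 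The realized forest at state $(C, v)$ has $|C| - 1 - f[C, v]$ edges, so the instance is a yes-instance iff some reachable state satisfies $f[C, v] \leq |C| - 1 - t$, and standard back-pointers recover $F$ and hence the cover. The DP runs in $O(2^{2t} \cdot n^2)$ time per coloring.

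Multiplying by $|\mathcal{F}|$ gives total deterministic time $O((2e)^{2t} \cdot t^{O(\log t)} \cdot n^3)$; since $t^{O(\log t)}$ is sub-exponential in $t$, it is absorbed into $c^t$ for any $c > (2e)^2$, yielding the claimed bound. The main subtlety lies in the DP formulation: already-closed paths in the partial forest are ``forgotten'' apart from their used colors in $C$, and tracking just the single currently open endpoint $v$ suffices precisely because the forest is viewed as one linearly-ordered vertex sequence with breaks rather than as an unordered collection of paths. Matching the exponential base $(2e)^2$ also hinges on using a hash family of the tight $e^{k} \cdot k^{O(\log k)}$ size rather than cruder $2^{O(k)}$-size splitter constructions.
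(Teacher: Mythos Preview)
Your proposal is correct and follows essentially the same approach as the paper: both reduce to color-coding with $2t$ colors on the at most $2t$ vertices lying on nontrivial paths, run a subset DP indexed by a color set and a current endpoint (your $f[C,v]$ counts ``breaks'', the paper's $T[X,v]$ counts paths, so $T = f+1$ and the acceptance conditions $T[X,v]\le |X|-t$ and $f[C,v]\le |C|-1-t$ coincide), and derandomize via perfect hash families/splitters. The only substantive difference is that you spell out the Naor--Schulman--Srinivasan size bound explicitly, whereas the paper defers the derandomization details.
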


\begin{proof}
Apply color-coding~\cite{ColorCoding}, \cite[\S\,5.2]{book4}. Call a path nontrivial if it has more than one vertex. Say that a coloring is \emph{good} for a cover by vertex-disjoint paths, if all vertices that appear in a nontrivial path receive a different color. Clearly, if there is a cover by at most $n-t$ paths then there is a cover by exactly $n-t$ paths, and in such a cover there are at most $2t$ vertices that appear in a nontrivial path. So if there is a path cover with $n-t$ paths, a random coloring with $2t$ colors is \emph{good} for this cover with probability $e^{-2t}$. (See e.g.~\cite[Lemma 5.4]{book4}.)

On a vertex-colored graph with color set $C = \{1, \dots 2t\}$, we solve the following problem by dynamic programming: for a set $X \subseteq C$ and $v \in V(G)$, let $T[X,v]$ be the smallest number $q$ for which there exists a collection $P_1,\dots,P_q$ of vertex-disjoint paths in $G$, such that $P_q$ ends in vertex $v$ and the multiset of colors used in $P_1,\dots,P_q$ is exactly equal to $X$. (In particular, this implies that no two vertices in a path may have the same color, for it would appear twice in the multiset and only once in the set $X$.)

Let the $2t$-coloring of $G$ be given by $f : V(G) \rightarrow [2t]$. Then $T[X,v]$ satisfies the following recurrence:
\begin{itemize}
\item $T[\{c\}, v] = 1$ if $f(v) = c$,
\item $T[X, v] = +\infty$ if $f(v) \notin X$,
\item $T[X, v] = \displaystyle\min\left\{
1 + 
\min_{u \in V(G) \setminus \{v\}}{T\bigl[X \setminus \{f(v)\}, u\bigr]}, 
\min_{u \in N_G(v)}{T\bigl[X \setminus \{f(v)\}, u\bigr]}
\right\}$, otherwise.

\end{itemize}

Intuitively, the interesting part of the recurrence has two cases: either we can let $v$ be a trivial path (so we pay $1$ for having a path with $v$, and then need a collection of paths that can end at any other vertex $u$ that covers the remaining colors), or we take a system of paths covering the remaining colors that ends in a neighbor $u$ of $v$, and add the edge $uv$ to the end of that path.

Now, observe that for any color-subset $X$ and vertex $v$, there is a cover of $G$ with $T[X, v] + (n - |X|)$ paths: we cover $|X|$ vertices, one of each color in $X$, by $T[X,v]$ paths and cover the remaining $(n - |X|)$ vertices by trivial paths. So if we encounter a set $X$ and vertex $v$ for which $T[X,v] + (n - |X|) \leq n - t$, or equivalently, $T[X,v] \leq |X| - t$, then the answer is ``yes''. On the other hand, if $G$ has a path cover by $n-t$ paths and a coloring is good for this cover, then letting $X$ be the set of colors of vertices that appear in a nontrivial path and $v$ an endpoint of such a path, we obtain $T[X,v] + (n - |X|) \leq n-t$.

So by trying $e^{2t}$ random colorings and solving the dynamic program for each one, we solve the ``cover by $n-t$ disjoint paths'' problem with constant success probability. With $c^{2t}$ for $c>e$, we can boost the success probability arbitrarily close to $1$. The dynamic program can be solved in time $O(2^{2t} n^3)$. The claimed running time follows.

We may de-randomize the algorithm by replacing the randomized coloring by a deterministic construction, e.g.\ via \emph{splitters}. We omit the details of this, by now standard, technique~\cite[\S\,5.6]{book4} \qedd
\end{proof}

In our application of Lemma~\ref{lempathcover}, we need to cover $G[T]$ by $|S|$ vertex-disjoint paths. 
Observe that $|S| \geq n/2 - 3k$, and consequently $|T| \leq n/2 + 3k$.
The difference between the order of the graph $G[T]$ and the number of paths $t$ with which we want to cover it, is therefore at most $6k$.

 Applying Lemma~\ref{lempathcover}, the running time of this step is thus $O(c^{6k} \cdot n^3)$, for arbitrary $c > (2e)^2$. 


To construct a Hamiltonian cycle, find the set $S$ using Lemma~\ref{lemnw} and Lemma~\ref{lemhaggkvist}, find an appropriate path-cover using Lemma~\ref{lempathcover}, and recover the Hamiltonian cycle of $G$ by undoing the Bondy-Chv\'{a}tal steps in Lemma~\ref{lemhaggkvist}. The claimed running time of Theorem~\ref{thm2} follows by adding up the corresponding terms and by using straightforward data structuring. 

\section{Remarks and open questions}\label{sec4}
We described two algorithms that solve the Hamiltonian cycle problem, with running time that depends polynomially on the graph size and single-exponentially on the distance from Dirac's bound, a condition that guarantees the Hamiltonicity of a graph. We have considered two different ways of measuring this distance. It would be interesting to improve the bases of the exponentials in our running times, and to obtain a polynomial kernel for the second parameterization.

A natural question left open by our work is whether the two parameterizations can be combined, to obtain a generalization of both. We suspect but have not been able to prove that the following holds.

\begin{conjecture}
If at least $n-k$ vertices of $G$ have degree at least $n/2-k$, then {\sc Hamiltonian Cycle} with input $G$ can be solved in time $c^k \cdot n^{O(1)}$ for some constant $c$. 
\end{conjecture}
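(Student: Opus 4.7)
The plan is to extend the Nash--Williams/path-cover framework of Theorem~\ref{thm2} to accommodate a bounded number of exceptional low-degree vertices. Let $L := \{v \in V(G) : d_G(v) < n/2 - k\}$, so $|L| \leq k$, and let $H := V(G) \setminus L$. Every vertex of $H$ has degree at least $n/2 - k$ in $G$ and hence at least $n/2 - 2k$ inside $G[H]$.

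A natural first attempt is to take the Bondy--Chv\'atal closure $G^\star$ and set $H^\star := \{v : d_{G^\star}(v) \geq n/2\}$; this set forms a clique in $G^\star$ by the closure property, since two happy vertices have degree sum at least $n$. If $|V \setminus H^\star| = O(k)$ one could then apply Theorem~\ref{thm:dist:to:clique} with sparse part $V \setminus H^\star$ and solve the $O(k)$-size kernel in $c^{O(k)}$ time. Unfortunately, $|V \setminus H^\star|$ need not be $O(k)$: for $K_{n/2, n/2}$ minus a perfect matching the closure adds no edges and every vertex remains unhappy, even though the graph is Hamiltonian and has $k = 1$. A more delicate argument is therefore needed.

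A more promising route is to apply Lemma~\ref{lemnw} to $G[H]$, whose order is at least $n - k$ and whose minimum degree is at least $n/2 - 2k$, satisfying the Dirac-type hypothesis once $n = \Omega(k)$. This returns either a Hamiltonian cycle of $G[H]$ or an independent set of size at least $n/2 - 2k + 1$. In the first case one would splice each $v \in L$ into the cycle between two consecutive cycle-neighbours; finding a simultaneous insertion for all of $L$ can be phrased as a bipartite matching in an auxiliary graph (generalised to handle inserting short $L$-paths when several vertices of $L$ must enter the same slot), in the spirit of the crown reduction of Theorem~\ref{thm:dist:to:clique}. In the second case, one would adapt H\"aggkvist's decomposition (Lemma~\ref{lemhaggkvist}) to produce a set $S$ of size at least $n/2 - \Theta(k)$ such that Hamiltonicity of $G$ reduces to covering $V(G) \setminus S$ by $|S|$ vertex-disjoint paths, and then apply Lemma~\ref{lempathcover}: since $|V \setminus S| - |S| = O(k)$, the path cover problem is solvable in $c^{O(k)} \cdot n^{O(1)}$ time.

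The hardest step is the adaptation of Lemma~\ref{lemhaggkvist}. H\"aggkvist's counting argument relies on a uniform minimum-degree bound, whereas here the $|L| \leq k$ exceptional vertices may have arbitrarily small degree, and the Bondy--Chv\'atal completion used inside the proof of Lemma~\ref{lemhaggkvist} no longer applies to edges incident to $L$. The natural remedy is to force all of $L$ onto the $V \setminus S$ side of the decomposition, so that the counting is applied only to $H$-vertices and the size bounds degrade by at most $O(k)$; alternatively, one could enumerate the ways $L$ interacts with the Hamiltonian cycle, arranging that this branching contributes only a $c^{O(k)}$ factor rather than $n^{O(k)}$. Either route appears to require a genuine strengthening of H\"aggkvist's structural argument, and it is this step that seems to be the real obstacle to resolving the conjecture.
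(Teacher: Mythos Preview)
The statement is a \emph{conjecture} in the paper, explicitly introduced with ``We suspect but have not been able to prove that the following holds.'' There is no proof in the paper to compare against; the authors leave it entirely open.

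Your proposal is not a proof either, and you say as much in the final paragraph. What you have written is a thoughtful survey of natural attack lines---Bondy--Chv\'atal closure followed by Theorem~\ref{thm:dist:to:clique}, and Nash--Williams plus an adapted H\"aggkvist decomposition---together with concrete counterexamples and obstacles for each. The $K_{n/2,n/2}$-minus-a-matching example correctly rules out the na\"{\i}ve closure approach, and your diagnosis that the real difficulty lies in extending H\"aggkvist's counting argument (Lemma~\ref{lemhaggkvist}) when a few vertices escape the uniform degree bound is plausible and goes well beyond anything the paper says. But none of this constitutes a proof: the ``splice $L$ into the cycle via matching'' step in the first case is only sketched and would need care (a Hamiltonian cycle of $G[H]$ need not extend to one of $G$, and failure to splice does not certify non-Hamiltonicity of $G$), and the second case is, by your own admission, blocked.

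So the honest summary is: the paper offers no proof, and neither do you; your write-up is a reasonable research note on why the conjecture resists the paper's two existing techniques, not a resolution of it.
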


The results of this paper can be extended with minimal changes to similar parameterizations of Ore's theorem (Lemma~\ref{lemore}). Extending the results to generalizations of Dirac's and Ore's theorems to \emph{digraphs} would be interesting. More generally, finding new algorithms by parameterizing structural results of graph theory (whether related to Hamiltonicity or not) is a promising direction.

\section{Acknowledgement}
We thank Naomi Nishimura, Ian Goulden, and Wendy Rush for obtaining a copy of Bondy's 1980 research report~\cite{bondy}.

\newpage

\bibliographystyle{plain}
\bibliography{submission}

\begin{thebibliography}{10}

\bibitem{Abu-KhzamFLS07}
Faisal~N. Abu-Khzam, Michael~R. Fellows, Michael~A. Langston, and W.~Henry
  Suters.
\newblock Crown structures for vertex cover kernelization.
\newblock {\em Theory Comput. Syst.}, 41(3):411--430, 2007.

\bibitem{Alon2011}
Noga Alon, Gregory Gutin, Eun~Jung Kim, Stefan Szeider, and Anders Yeo.
\newblock Solving {MAX-r-SAT} above a tight lower bound.
\newblock {\em Algorithmica}, 61(3):638--655, Nov 2011.

\bibitem{ColorCoding}
Noga Alon, Raphael Yuster, and Uri Zwick.
\newblock Color-coding.
\newblock {\em J. {ACM}}, 42(4):844--856, 1995.

\bibitem{Bellman1962}
Richard Bellman.
\newblock Dynamic programming treatment of the travelling salesman problem.
\newblock {\em J. Assoc. Comput. Mach.}, 9:61--63, 1962.

\bibitem{BCDF}
Ivona Bez{\'{a}}kov{\'{a}}, Radu Curticapean, Holger Dell, and Fedor~V. Fomin.
\newblock Finding detours is fixed-parameter tractable.
\newblock In {\em {ICALP} 2017}, pages 54:1--54:14, 2017.

\bibitem{Bjorklund}
Andreas Bj{\"{o}}rklund.
\newblock Determinant sums for undirected {H}amiltonicity.
\newblock {\em {SIAM} J. Comput.}, 43(1):280--299, 2014.

\bibitem{BondyChvatal}
J.~A. Bondy and V.~Chv\'atal.
\newblock A method in graph theory.
\newblock {\em Discrete Math.}, 15(2):111--135, January 1976.

\bibitem{bondy}
J.A. Bondy.
\newblock {\em Longest Paths and Cycles in Graphs of High Degree}.
\newblock Research report. Department of Combinatorics and Optimization,
  University of Waterloo, 1980.

\bibitem{ChorFJ04}
Benny Chor, Mike Fellows, and David~W. Juedes.
\newblock Linear kernels in linear time, or how to save $k$ colors in
  {$O$}$(n^2)$ steps.
\newblock In {\em Proc. 30th WG}, pages 257--269, 2004.

\bibitem{EE}
Robert Crowston, Mark Jones, and Matthias Mnich.
\newblock Max-cut parameterized above the {E}dwards-{E}rd{\H{o}}s bound.
\newblock {\em Algorithmica}, 72(3):734--757, 2015.

\bibitem{book4}
Marek Cygan, Fedor~V. Fomin, Lukasz Kowalik, Daniel Lokshtanov, D{\'{a}}niel
  Marx, Marcin Pilipczuk, Michal Pilipczuk, and Saket Saurabh.
\newblock {\em Parameterized Algorithms}.
\newblock Springer, 2015.

\bibitem{dahlhaus}
Elias Dahlhaus, P{\'{e}}ter Hajnal, and Marek Karpinski.
\newblock On the parallel complexity of {H}amiltonian cycle and matching
  problem on dense graphs.
\newblock {\em J. Algorithms}, 15(3):367--384, 1993.

\bibitem{Diestel}
Reinhard Diestel.
\newblock {\em Graph Theory, 4th Edition}, volume 173 of {\em Graduate texts in
  mathematics}.
\newblock Springer, 2012.

\bibitem{Dirac}
G.~A. Dirac.
\newblock Some theorems on abstract graphs.
\newblock {\em Proceedings of the London Mathematical Society}, s3-2(1):69--81,
  1952.

\bibitem{Fellows03}
Michael~R. Fellows.
\newblock Blow-ups, win/win's, and crown rules: {S}ome new directions in {FPT}.
\newblock In Hans~L. Bodlaender, editor, {\em Proc. 29th WG}, pages 1--12.
  Springer, 2003.

\bibitem{FG06}
J{\"o}rg Flum and Martin Grohe.
\newblock {\em Parameterized Complexity Theory}.
\newblock Texts in Theoretical Computer Science. An EATCS Series.
  Springer-Verlag, Berlin, 2006.

\bibitem{gould2014recent}
Ronald~J Gould.
\newblock Recent advances on the {H}amiltonian problem: Survey iii.
\newblock {\em Graphs and Combinatorics}, 30(1):1--46, 2014.

\bibitem{Guo}
Jiong Guo, Falk H{\"{u}}ffner, and Rolf Niedermeier.
\newblock A structural view on parameterizing problems: Distance from
  triviality.
\newblock In {\em {IWPEC} 2004}, pages 162--173, 2004.

\bibitem{vertex_cover_param}
Gregory~Z. Gutin, Eun~Jung Kim, Michael Lampis, and Valia Mitsou.
\newblock Vertex cover problem parameterized above and below tight bounds.
\newblock {\em Theory Comput. Syst.}, 48(2):402--410, 2011.

\bibitem{IPEC09}
Gregory~Z. Gutin, Eun~Jung Kim, Stefan Szeider, and Anders Yeo.
\newblock A probabilistic approach to problems parameterized above or below
  tight bounds.
\newblock {\em J. Comput. Syst. Sci.}, 77(2):422--429, 2011.

\bibitem{TSPavg}
Gregory~Z. Gutin and Viresh Patel.
\newblock Parameterized traveling salesman problem: Beating the average.
\newblock {\em {SIAM} J. Discrete Math.}, 30(1):220--238, 2016.

\bibitem{haggkvist}
Roland H\"aggkvist.
\newblock On the structure of non-{H}amiltonian graphs i.
\newblock {\em Combinatorics, Probability and Computing}, 1(1):27--34, 1992.

\bibitem{HeldKarp1962}
Michael Held and Richard~M. Karp.
\newblock A dynamic programming approach to sequencing problems.
\newblock {\em J. Soc. Indust. Appl. Math.}, 10:196--210, 1962.

\bibitem{ImpagliazzoEtAl2001}
Russell Impagliazzo, Ramamohan Paturi, and Francis Zane.
\newblock Which problems have strongly exponential complexity?
\newblock {\em J. Comput. Syst. Sci.}, 63(4):512--530, 2001.

\bibitem{Karp1972}
Richard~M. Karp.
\newblock Reducibility among combinatorial problems.
\newblock In {\em Complexity of computer computations}, pages 85--103.
  Springer, 1972.

\bibitem{KarpIE}
Richard~M. Karp.
\newblock Dynamic programming meets the principle of inclusion and exclusion.
\newblock {\em Operations Research Letters}, 1(2):49 -- 51, 1982.

\bibitem{Knuth2018}
D.E. Knuth.
\newblock {\em The Art of Computer Programming: Updates; Pre-Fascicle 8A, A
  draft of section 7.2.2.4: Hamiltonian paths and cycles}.
\newblock Number v. 4 in Addison-Wesley series in computer science and
  information proceedings. Addison-Wesley, 2018.

\bibitem{kuhn2014hamilton}
Daniela K{\"u}hn and Deryk Osthus.
\newblock Hamilton cycles in graphs and hypergraphs: an extremal perspective.
\newblock {\em arXiv preprint arXiv:1402.4268}, 2014.

\bibitem{Lawler1985}
E.L. Lawler, D.B. Shmoys, A.H.G.R. Kan, and J.K. Lenstra.
\newblock {\em The Traveling Salesman Problem}.
\newblock John Wiley \& Sons, 1985.

\bibitem{LiSurvey}
Hao Li.
\newblock Generalizations of {D}irac's theorem in {H}amiltonian graph theory--a
  survey.
\newblock {\em Discrete Mathematics}, 313(19):2034 -- 2053, 2013.
\newblock Cycles and Colourings 2011.

\bibitem{MahajanRaman}
Meena Mahajan and Venkatesh Raman.
\newblock Parameterizing above guaranteed values: Maxsat and maxcut.
\newblock {\em J. Algorithms}, 31(2):335--354, 1999.

\bibitem{MRS}
Meena Mahajan, Venkatesh Raman, and Somnath Sikdar.
\newblock Parameterizing above or below guaranteed values.
\newblock {\em J. Comput. Syst. Sci.}, 75(2):137--153, 2009.

\bibitem{nw}
C.St.J.A. Nash-Williams.
\newblock Edge-disjoint {H}amiltonian circuits in graphs with large valency.
\newblock In L.~Mirksy, editor, {\em Studies in Pure Mathematics}, pages
  157--183. Academic Press, London, 1971.

\bibitem{Ore}
Oystein Ore.
\newblock Note on {H}amilton circuits.
\newblock {\em The American Mathematical Monthly}, 67(1):55--55, 1960.

\end{thebibliography}

\end{document}